\documentclass{article}

\usepackage{microtype}
\usepackage{graphicx}
\usepackage{subcaption}
\usepackage{booktabs}

\usepackage{hyperref}

\usepackage[accepted]{icml2026}

\usepackage{amsmath}
\usepackage{amssymb}
\usepackage{mathtools}
\usepackage{amsthm}

\usepackage[capitalize,noabbrev]{cleveref}

\theoremstyle{plain}
\newtheorem{theorem}{Theorem}[section]
\newtheorem{proposition}[theorem]{Proposition}

\theoremstyle{definition}

\theoremstyle{remark}

\icmltitlerunning{
Neural QAOA²: Differentiable Joint Graph Partitioning and Parameter Initialization for Quantum Combinatorial Optimization
}

\usepackage{enumitem}
\usepackage{multirow}
\usepackage{makecell}
\usepackage[table]{xcolor}

\begin{document}

\twocolumn[
  \icmltitle{
  Neural QAOA\textsuperscript{2}: Differentiable Joint Graph Partitioning and Parameter Initialization for Quantum Combinatorial Optimization
}

  \icmlsetsymbol{equal}{*}

  \begin{icmlauthorlist}
  \icmlauthor{Zubin Zheng}{equal,sustech}
  \icmlauthor{Jiahao Wu}{equal,sustech}
  \icmlauthor{Shengcai Liu}{sustech}
  \end{icmlauthorlist}

  \icmlaffiliation{sustech}{Guangdong Provincial Key Laboratory of Brain-Inspired Intelligent Computation, Department of CSE, SUSTech}

  \icmlcorrespondingauthor{Shengcai Liu}{liusc3@sustech.edu.cn}

  \icmlkeywords{Quantum Approximate Optimization Algorithm, Differentiable Programming, Combinatorial Optimization, Zero-shot Generalization}

  \vskip 0.3in
]

\printAffiliationsAndNotice{\icmlEqualContribution}

\begin{abstract}
The quantum approximate optimization algorithm~(QAOA) holds promise for combinatorial optimization but is constrained by limited qubits.
While divide-and-conquer frameworks like QAOA\textsuperscript{2} address scalability by partitioning graphs into subgraphs, existing methods suffer from two fundamental limitations:
i) misalignment between heuristic partitioning metrics and quantum optimization goals,
and ii) topology-blind parameter initialization that leads to optimization cold starts.
To bridge these gaps, we propose \textbf{Neural~QAOA\textsuperscript{2}}, an end-to-end differentiable framework that jointly generates graph partitions and initial parameters.
By integrating a generative evaluative network (GEN), our method utilizes a differentiable quantum evaluator as a high-fidelity performance surrogate to provide direct gradient guidance, enabling the joint generator to learn the intrinsic mapping from graph topology to high-quality partition and parameter configurations.
Extensive experiments on 183 QUBO, Ising, and MaxCut instances (21 to 1000 variables) demonstrate that our gradient-driven approach broadly outperforms heuristic baselines, ranking first on 101 instances.
It exhibits zero-shot generalization across out-of-distribution graph topologies and scales.
\end{abstract}

\section{Introduction}
\label{sec:intro}

\begin{figure}[th]
\centering
\includegraphics[width=\columnwidth]{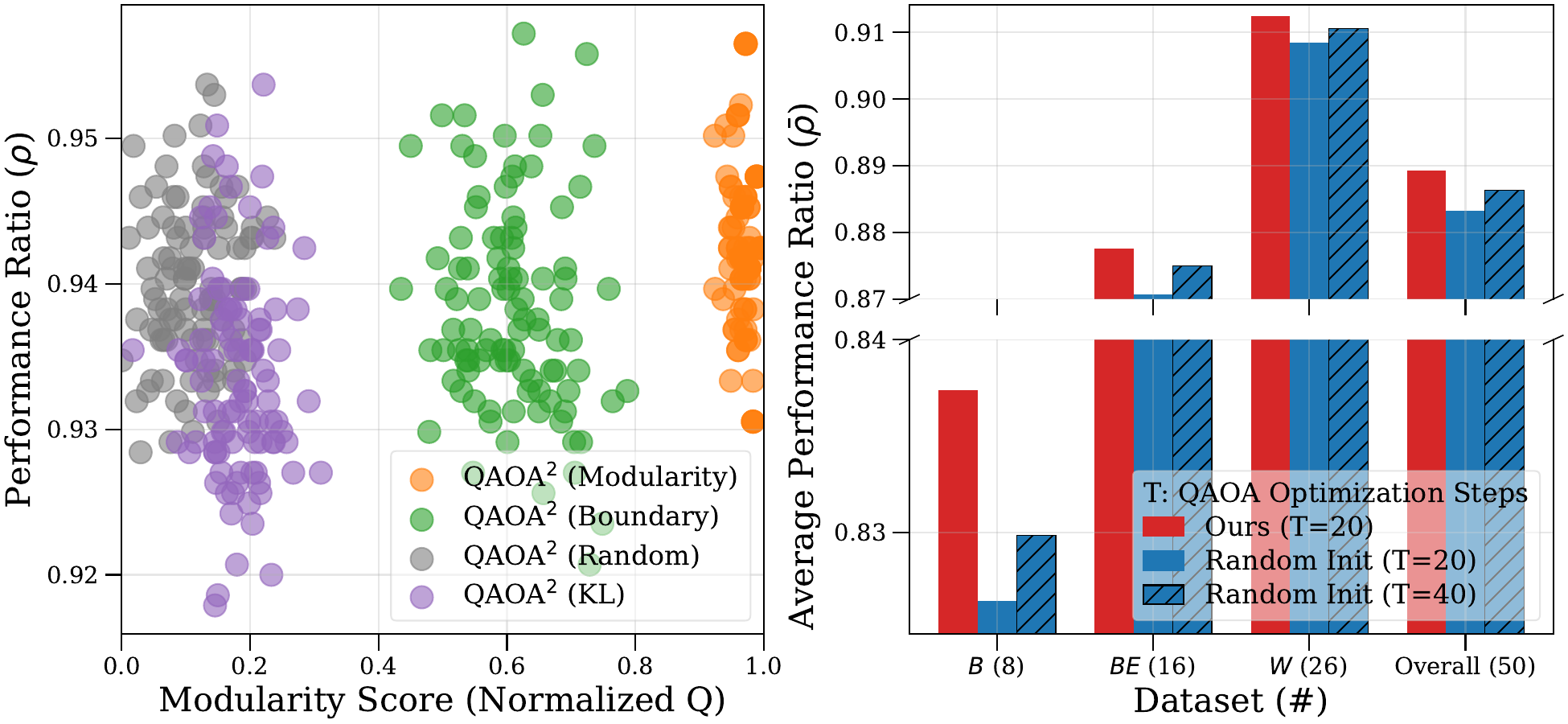}
\caption{\textbf{Existing D\&C frameworks suffer from two fundamental limitations.}
Left: misalignment between partitioning metrics and optimization goals.
Graph-theoretic metrics (e.g., modularity) fail to align with quantum solution quality, showing negligible correlation (Pearson's $r = 0.2859$ on instance \textit{g05\_100.1}) with the performance ratio.
Right: topology-blind parameter initialization.
Random initialization suffers from cold starts; even with double optimization steps~($T=40$), it fails to match our topology-aware initialization~($T=20$) across all datasets.
}
\label{fig:limitations}
\end{figure}

Quantum computing offers a transformative approach to NP-hard combinatorial optimization~\cite{gemeinhardt2023quantum}.
The quantum approximate optimization algorithm~(QAOA, \citealp{farhi2014quantum}) has emerged as a prominent approach for quadratic unconstrained binary optimization (QUBO) problems~\cite{lucas2014ising}.
While QAOA holds the potential for quantum advantage, a critical bottleneck remains: real-world applications typically involve thousands of variables, whereas current quantum resources are limited to the hundred-qubit scale~\cite{preskill2018quantum, harrigan2021quantum, kim2023evidence, google2025observation}.
Consequently, enhancing algorithmic scalability is pivotal for realizing ``quantum utility''~\cite{kim2023evidence}.

To address scalability, the divide-and-conquer (D\&C) strategy has become widely adopted~\cite{guerreschi2021solving, li2022large, zhou2023qaoa}.
Within this D\&C paradigm, QAOA-in-QAOA (QAOA\textsuperscript{2}, \citealp{zhou2023qaoa}) has emerged as a representative framework.
It tackles MaxCut (equivalent to QUBO) by solving partitioned subgraphs via QAOA, and then merging these sub-solutions to derive the global solution.
By leveraging the $\mathbb{Z}_2$ symmetry to reformulate the merging step, QAOA\textsuperscript{2} enables large-scale instances to be processed on small-scale quantum devices.

However, existing D\&C frameworks suffer from two fundamental limitations:
\textbf{i) Misalignment between partitioning metrics and optimization goals.}
Existing frameworks rely on handcrafted partitioning heuristics (e.g., modularity, \citealp{clauset2004finding}) that optimize generic graph-theoretic metrics.
However, these metrics are decoupled from the optimization goal.
As a result, optimizing such topological proxies often fails to translate into high-quality quantum solutions.
This is empirically confirmed in Figure~\ref{fig:limitations}~(Left), which reveals a negligible correlation between these metrics and global solution quality.
\textbf{ii) Topology-blind parameter initialization.}
When solving subgraphs, existing frameworks employ random initialization for quantum circuit parameters $(\boldsymbol{\gamma}, \boldsymbol{\beta})$, blind to the intrinsic mapping between parameters and graph topology~\cite{katial2025instance}.
This unguided initialization leads to an optimization cold start~\cite{jain2022graph, liang2024graph}.
As evidenced in Figure~\ref{fig:limitations}~(Right), even doubling the optimization steps with random initialization fails to match a topology-aware strategy, imposing substantial iteration overhead.

To bridge these gaps, we propose \textbf{Neural QAOA\textsuperscript{2}}, which reimagines the D\&C paradigm by integrating a differentiable generative evaluative network~(GEN) composed of a quantum evaluator and a joint generator.
To resolve the metric misalignment (limitation~i), the quantum evaluator functions as a high-fidelity surrogate, offering direct gradient guidance to ensure partitions are optimized for quantum performance rather than heuristic proxies.
In parallel, addressing the initialization bottleneck (limitation~ii), the joint generator exploits this guidance to learn a topology-aware mapping, directly synthesizing high-quality circuit parameters $(\boldsymbol{\gamma}, \boldsymbol{\beta})$ conditioned on the partitions.
Crucially, to enable gradient flow through discrete structures, we introduce a differentiable discretization mechanism, ensuring that this joint learning process remains fully end-to-end trainable.
Finally, our framework adopts a hybrid strategy that captures generalizable patterns through offline pre-training and performs instance-specific refinement via test-time adaptation.

The main contributions of this work are summarized below.
\begin{itemize}[leftmargin=*, topsep=0pt, itemsep=2pt]
    \item We propose Neural~QAOA\textsuperscript{2}, a novel framework pioneering gradient-driven scalable quantum combinatorial optimization.
    By integrating a quantum evaluator and a joint generator into a unified generative evaluative network, our method effectively resolves the metric misalignment and topology-blind initialization limitations.
    \item We devise specialized mechanisms to unlock the trainability of the discrete partitioning: an orthogonal complement head that imposes a geometric inductive bias to construct discriminative soft partitions, and a greedy capacity discretization with straight-through estimator to enable gradient backpropagation while strictly satisfying qubit constraints.
    \item Extensive experiments on 183 QUBO, Ising, and MaxCut instances (21 to 1000 variables) show that Neural~QAOA\textsuperscript{2} broadly outperforms heuristic baselines.
    It ranks first on 101 out of 183 instances, nearly doubling the 53 wins of the strongest runner-up, and demonstrates strong generalization across both in-distribution and out-of-distribution graph topologies and scales.
\end{itemize}

\section{Related Work}
\label{sec:related}

\textbf{Scalable QAOA via D\&C.}
D\&C approaches for scalable QAOA include QAOA\textsuperscript{2}~\cite{zhou2023qaoa} and DC-QAOA~\cite{li2022large}.
Among them, QAOA\textsuperscript{2} is a representative D\&C approach.
However, this method relies on heuristic proxy metrics, such as modularity~\cite{clauset2004finding}, boundary size~\cite{guerreschi2021solving}, and cut size~\cite{kernighan1970efficient}, which are not directly aligned with the ultimate quantum optimization objective.
Building upon the D\&C framework, Neural QAOA\textsuperscript{2} addresses this by introducing a differentiable quantum evaluator to predict quantum performance, providing gradients to guide partition generation.

\textbf{Parameter Initialization in QAOA.}
Initialization strategies are generally categorized into
topology-blind (e.g., random),
physics-inspired (e.g., TQA,~\citealp{sack2021quantum}; INTERP/FOURIER,~\citealp{zhou2020quantum}),
and learning-based approaches (e.g., QIBPI,~\citealp{katial2025instance}).
However, integrating these strategies into D\&C frameworks yields suboptimal results (see detailed analysis in Appendix~\ref{app:advanced-param-init-deeper-circuit}).
Decoupled from partitioning, they fail to adapt to specific subgraph structures.
Notably, standard topology-blind strategies ignore the intrinsic topology-parameter mapping, causing optimization cold starts~\cite{jain2022graph}.
In contrast, our joint generator explicitly captures this mapping, enabling an efficient warm start tailored to the partition.

\section{Preliminaries}
\label{sec:prelim}

\textbf{Equivalence between QUBO and MaxCut.}
QUBO minimizes the objective $f(\boldsymbol{x}) = \sum_{i,j} Q_{ij} x_i x_j + \sum_i c_i x_i$ for binary variables $\boldsymbol{x} \in \{0, 1\}^n$.
This problem is mathematically equivalent to finding the ground state of an Ising Hamiltonian.
By substituting $x_i = (1 - z_i)/2$ and absorbing linear terms via auxiliary spins~\cite{glover2022quantum,lucas2014ising}, QUBO formally reduces to weighted MaxCut on a graph $G=(V, E)$, where $|V|=N = n+1$.
MaxCut aims to find a spin configuration $\boldsymbol{z} \in \{+1, -1\}^{N}$ that maximizes the Hamiltonian $H_C = \frac{1}{2} \sum_{(u,v) \in E} w_{uv} (1 - z_u z_v)$, where $w_{uv}$ denotes edge weight.
In the quantum setting, classical variables $z_u$ are promoted to Pauli-$Z$ operators $\hat{\sigma}^z_u$ to construct the quantum Hamiltonian $\hat{H}_C$.

\textbf{QAOA.}
This variational algorithm~\cite{farhi2014quantum} aims to approximate the ground state of $\hat{H}_C$, which corresponds to the optimal MaxCut solution.
It employs a parameterized ansatz generated by a depth-$p$ quantum circuit:
\begin{equation}
    |\psi(\boldsymbol{\gamma}, \boldsymbol{\beta})\rangle = \prod_{l=1}^p U_B(\beta_l) U_C(\gamma_l) |+\rangle^{\otimes N},
\end{equation}
where $|+\rangle^{\otimes N}$ is the uniform superposition state of $N$ qubits, and $(\boldsymbol{\gamma}, \boldsymbol{\beta}) \in [0, 2\pi)^{2p}$ are variational parameters.
The problem unitary $U_C(\gamma_l) = e^{-i \gamma_l \hat{H}_C}$ encodes the cost function, while the mixer unitary $U_B(\beta_l) = e^{-i \beta_l \sum_{j=1}^N \hat{\sigma}^x_j}$ drives transitions using Pauli-$X$ operators $\hat{\sigma}^x_j$.
The parameters are optimized to maximize the expectation value:
\begin{equation}
    (\boldsymbol{\gamma}^*, \boldsymbol{\beta}^*) = \operatorname*{arg\,max}_{\boldsymbol{\gamma}, \boldsymbol{\beta}} \langle \psi(\boldsymbol{\gamma}, \boldsymbol{\beta}) | \hat{H}_C | \psi(\boldsymbol{\gamma}, \boldsymbol{\beta}) \rangle.
\end{equation}
Upon convergence, measuring the state $|\psi(\boldsymbol{\gamma}^*, \boldsymbol{\beta}^*)\rangle$ yields high-quality solution bitstrings.

\textbf{QAOA\textsuperscript{2}.}
Standard QAOA requires $N$ qubits to solve an $N$-node graph, limiting scalability.
To overcome this, QAOA\textsuperscript{2}~\cite{zhou2023qaoa} employs a D\&C framework leveraging $\mathbb{Z}_2$ symmetry.
It first partitions the graph $G$ into subgraphs $\{G_i\}_{i=1}^k$ fitting available quantum resources, and independently optimizes local solutions $\boldsymbol{z}^{(i)}$ via QAOA.
To reconstruct the global solution, the framework determines an optimal polarity $s_i \in \{+1, -1\}$ (flip or not) for each local solution.
This merging process is strictly equivalent to solving a new MaxCut problem on a reduced graph with $k$ nodes:
$\max_{\mathbf{s}} \frac{1}{2} \sum_{i<j} w'_{ij} (1 - s_i s_j)$,
where the weight $w'_{ij}$ quantifies the net cut gain between subgraphs conditioned on their local solutions.
QAOA is then applied to solve this reduced problem to merge local solutions according to $\boldsymbol{s}$.
If the reduced problem size still exceeds qubit limits, it is recursively partitioned, solved and reformulated until it fits the available number of qubits.
Finally, the global solution is recovered by propagating optimal polarities top-down to merge low-level local solutions.

\section{Neural QAOA\textsuperscript{2}}
\label{sec:method}
We propose Neural~QAOA\textsuperscript{2}, a unified divide-and-conquer framework that reimagines the paradigm by integrating a generative evaluative network (GEN).
GEN replaces heuristic partitioning and random parameter initialization with a single neural backbone that jointly generates graph partitions and QAOA initial parameters.
This end-to-end design aligns partitioning with quantum optimization objectives and enables topology-aware warm starts.

\begin{figure*}[t!]
    \centering
    \includegraphics[width=\textwidth]{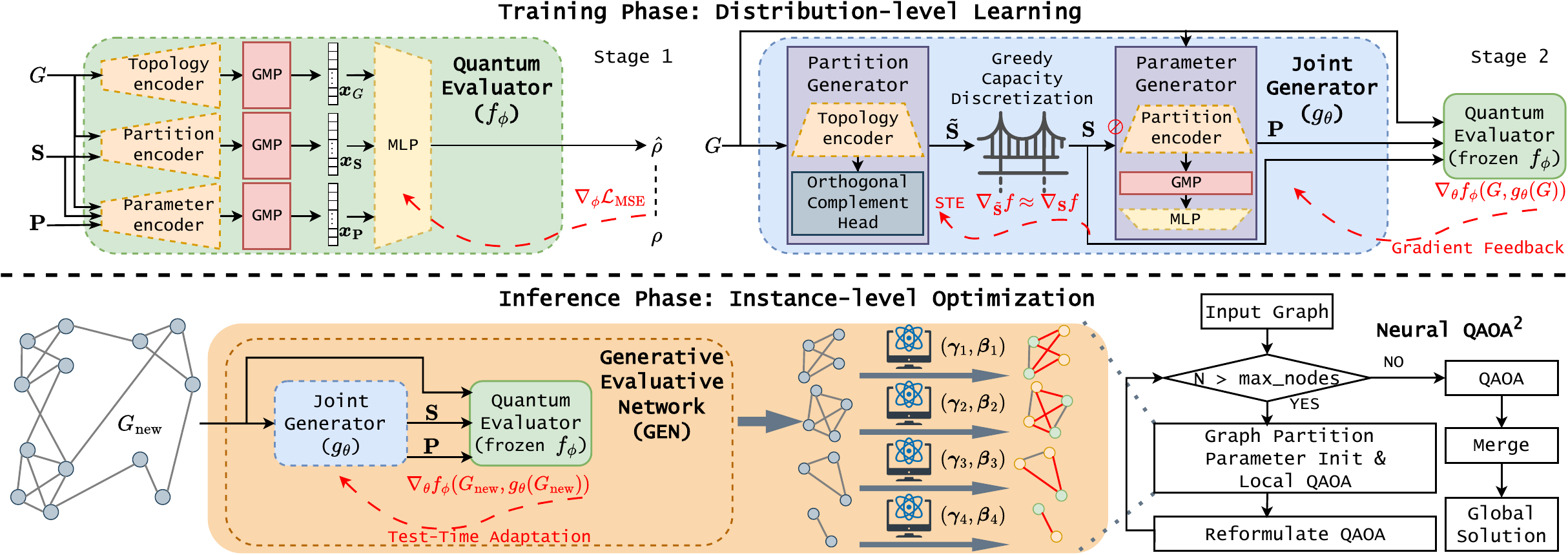}
    \caption{\textbf{Overview of Generative Evaluative Network.}
    As the core engine of Neural QAOA\textsuperscript{2}, GEN unifies a joint generator ($g_\theta$) and a quantum evaluator ($f_\phi$).
    $g_\theta$ employs an orthogonal complement head and greedy capacity discretization (visualized as a suspension bridge) to strictly enforce qubit constraints while enabling gradient backpropagation via the straight-through estimator (STE, red dashed line).
    $f_\phi$ serves as a differentiable surrogate guiding both distribution-level pre-training and instance-level test-time adaptation.
    }
    \label{fig:main_method}
\end{figure*}

\subsection{Overview of GEN}
The objective of GEN is to identify the graph partition $\mathbf{S}$ and variational parameters $\mathbf{P}$ that maximize the performance ratio $\rho$ of QAOA\textsuperscript{2} on a graph $G=(V, E)$.
To resolve the unboundedness and numerical instability arising from negative edge weights, we define the performance ratio as:
\begin{equation}
\rho = \frac{\text{Cut}(G)-\text{Neg}(G)}{\text{OPT}(G)-\text{Neg}(G)},
\end{equation}
where $\text{Cut}(G)$ denotes the cut value under the configuration $(\mathbf{S},\mathbf{P})$, $\text{OPT}(G)$ is the best-known cut, and $\text{Neg}(G)\le 0$ is the sum of all negative edge weights.
This provides the following guarantee, yielding stable gradient scales.

\begin{proposition}[Boundedness]
\label{prop:boundedness}
 For any graph $G$, the metric $\rho$, evaluated on the output of the QAOA\textsuperscript{2}, is theoretically bounded in $[0.5, 1.0]$~(proof provided in Appendix~\ref{app:proofs}).
\end{proposition}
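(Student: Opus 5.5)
Upper bound: the argument is short. $\text{OPT}(G)$ is the best cut, so $\text{Cut}(G)\le \text{OPT}(G)$ for any configuration, and therefore $\rho\le 1$. I also need the denominator to be positive. The cut value of any $\boldsymbol{z}$ is $\sum_{(u,v)\text{ cut}} w_{uv}$, which is at least the sum of the negative weights, so $\text{OPT}(G)\ge \text{Cut}\ge \text{Neg}(G)$. I will assume, or show separately, that $\text{OPT}(G)>\text{Neg}(G)$ whenever $G$ has at least one edge; otherwise $\rho$ is undefined and the degenerate case is excluded.

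Lower bound: the plan is to use the $\mathbb{Z}_2$-symmetric merging step of QAOA\textsuperscript{2} as a derandomization of a random flip. Write $\text{Pos}(G)=\sum_{w_{uv}>0} w_{uv}$ and $W=\text{Pos}+\text{Neg}$.

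The key steps, in order:

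1. Show that the global solution returned by QAOA\textsuperscript{2} satisfies $\text{Cut}\ge \frac12\big(\text{Pos}(G)+\text{Neg}(G)\big)+$ (nonnegative slack), i.e.\ at least the average cut. I would argue this recursively. At the merging level, write the total cut as $C_{\text{in}}+\text{(cross terms)}$. Here $C_{\text{in}}$ is the sum of the intra-subgraph cuts, which is invariant under polarity flips. The cross-edge contribution is an affine function of $\mathbf{s}$ whose average over uniformly random $\mathbf{s}$ equals $\frac12\sum_{\text{cross}} w_{uv}$. Provided the merge (the reduced MaxCut with weights $w'_{ij}$) returns $\mathbf{s}$ at least as good as its average, the cross part contributes at least $\frac12 W_{\text{cross}}$.
2. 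Inside each subgraph, argue the same way: a QAOA local solution, with the best measured bitstring kept, achieves at least the expectation, and at $\boldsymbol{\gamma}=0$ that expectation is the uniform average $\frac12 W_i$. Optimization maximizes the expectation, so the result is at least $\frac12 W_i$. Summing gives $\text{Cut}\ge \frac12 W=\frac12(\text{Pos}+\text{Neg})$.
3. Use $\text{OPT}\le \text{Pos}$ (drop the negative terms) and convert. From step 2, $\text{Cut}-\text{Neg}\ge \frac12(\text{Pos}-\text{Neg})\ge \frac12(\text{OPT}-\text{Neg})$, which gives $\rho\ge 1/2$.

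The main obstacle is step 1/2's assumption that every QAOA call, both for subgraphs and for the reduced merge problem (recursively), returns a solution at least as good as the uniform random average. For the expectation this holds by optimality over $(\boldsymbol{\gamma},\boldsymbol{\beta})$, since $\boldsymbol{\gamma}=0$ is feasible. For the final sampled bitstring it holds if we take the best of the samples, or, if needed, enforce a post-check that falls back to $\pm$ flips or to a greedy local improvement. I would state this as the operating assumption of QAOA\textsuperscript{2}, taking the best sample with expectation at least the $\gamma=0$ value. The recursion then needs only that the reduced problem's objective differs from the true global cut by a flip-invariant constant, which the $\mathbb{Z}_2$ reformulation gives exactly.
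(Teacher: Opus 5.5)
Your proposal is correct and follows the paper's proof: the upper bound comes from $\text{Cut}(G)\le\text{OPT}(G)$, and the lower bound from the random-cut guarantee $\text{Cut}(G)\ge\frac12(\text{Pos}(G)+\text{Neg}(G))$ combined with $\text{OPT}(G)\le\text{Pos}(G)$. The only difference is that the paper cites Theorem~2 of Zhou et al.\ for the random-cut guarantee, whereas you re-derive it recursively from the flip-invariance of intra-subgraph cuts and the $\boldsymbol{\gamma}=0$ baseline. Both arguments rest on the same premise, which the paper also states explicitly: every QAOA call does at least as well as a uniformly random assignment. You are right to end up treating this as an assumption, but your two supporting claims do not hold as stated. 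The best of finitely many shots need not reach the expectation, and 20 gradient steps from a random start need not reach the $\boldsymbol{\gamma}=0$ value. Only a global maximizer of the expectation, or the maximum over the support of the output distribution, is guaranteed to.
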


To optimize this metric within quantum resource constraints, we formally define the discrete partition matrix $\mathbf{S} \in \{0,1\}^{N \times k}$ and the continuous parameter matrix $\mathbf{P} \in [0, 2\pi)^{2p \times k}$.
Specifically, $\mathbf{S}$ assigns $N$ nodes to $k$ subgraphs subject to the assignment constraint $\sum_{j=1}^{k} \mathbf{S}_{ij} = 1$ (ensuring unique membership) and the capacity constraint $\sum_{i=1}^{N} \mathbf{S}_{ij} \le \text{max\_nodes}$ (respecting qubit limits).
Correspondingly, $\mathbf{P}$ represents the $2p$ variational parameters $(\boldsymbol{\gamma}, \boldsymbol{\beta}) \in [0, 2\pi)^{2p}$ for each of the $k$ subgraphs.

As illustrated in Figure~\ref{fig:main_method}, the GEN architecture comprises two components.
The quantum evaluator~($f_\phi$) serves as a differentiable surrogate to circumvent expensive quantum circuit simulations and provide gradient guidance for the generator.
For a given graph and candidate configuration $(G, \mathbf{S}, \mathbf{P})$, it predicts the performance ratio:
$\hat{\rho} = f_{\phi}(G, \mathbf{S}, \mathbf{P})$.
Complementarily, the joint generator~($g_\theta$) is a parameterized generative model that directly infers the discrete partition $\mathbf{S}$ and continuous parameters $\mathbf{P}$ from the graph $G$:
$(\mathbf{S}, \mathbf{P})=g_\theta(G)$.
To enable efficient training and inference, we design an optimization protocol consisting of distribution-level learning and instance-level optimization.

\textbf{Training Phase: Distribution-level Learning.}
In this phase, we aim to learn a generalizable mapping from graph structure distributions to high-quality configurations using an offline dataset $\mathcal{D}_{\text{offline}}$ and its constituent graph instances $\mathcal{D}_{\text{graph}}$.
The process involves two sequential stages.
First, we train the quantum evaluator $f_{\phi}$ via supervised learning on labeled data $\mathcal{D}_{\text{offline}} = \{(G_i, \mathbf{S}_i, \mathbf{P}_i, \rho_i)\}$ to minimize the prediction error $\mathcal{L}_{\text{MSE}}$, establishing a high-fidelity surrogate for QAOA\textsuperscript{2} simulations:
\begin{equation}
\min_{\phi} \mathbb{E}_{(G, \mathbf{S}, \mathbf{P}, \rho) \sim \mathcal{D}_{\text{offline}}} \left[ \mathcal{L}_{\text{MSE}}(f_{\phi}(G, \mathbf{S}, \mathbf{P}), \rho) \right].
\end{equation}
Upon convergence, $f_{\phi}$ is frozen to serve as a differentiable objective providing gradients $\nabla_\theta f_{\phi}$.
Subsequently, the joint generator $g_{\theta}$ is trained in an unsupervised manner to maximize the expected performance ratio on $\mathcal{D}_{\text{graph}}$ via gradient ascent guided by $f_{\phi}$:
\begin{equation}
\max_{\theta} \mathbb{E}_{G \sim \mathcal{D}_{\text{graph}}} [ f_\phi(G, g_\theta(G)) ].
\end{equation}

\textbf{Inference Phase: Instance-level Optimization.}
For an unseen instance $G_{\text{new}}$, we employ a test-time adaptation (TTA) strategy to refine the configuration.
Initially, a single forward pass with the pre-trained $g_{\theta}$ yields a high-quality initialization $(\mathbf{S}_0, \mathbf{P}_0) = g_{\theta}(G_{\text{new}})$, leveraging the learned prior.
To further boost performance, we utilize the frozen $f_{\phi}$ as an instance-specific objective.
We fine-tune the generator parameters $\theta$ via limited-step gradient ascent to obtain:
\begin{equation}
\theta^* = \arg\max_{\theta} f_\phi(G_{\text{new}}, g_\theta(G_{\text{new}})).
\end{equation}
The output $(\mathbf{S}^*, \mathbf{P}^*) = g_{\theta^*}(G_{\text{new}})$ transitions from a general distribution prior to an instance-specialized configuration.

\subsection{Quantum Evaluator}

The quantum evaluator $f_\phi(G, \mathbf{S}, \mathbf{P})$ serves as a differentiable surrogate to circumvent expensive physical simulations of QAOA\textsuperscript{2} and provide gradient guidance for the upstream generator.
As illustrated in Figure~\ref{fig:main_method}, to encode heterogeneous inputs (discrete topologies, binary partitions, and continuous parameters), the evaluator employs a multi-view graph neural network architecture.
This design independently processes distinct modalities through three parallel pipelines to map these inputs into a unified latent space:

\textbf{Global Topology View.}
To capture the global context of graph $G$, the adjacency matrix $\mathbf{A} \in \mathbb{R}^{N \times N}$ and initial node features $\mathbf{X} \in \mathbb{R}^{N \times d}$ are utilized (see Appendix~\ref{app:network_arch}).
A topology encoder extracts the global $h$-dimensional embedding $\mathbf{H}_{\text{topology}} = \text{Encoder}_{\text{topology}}(\mathbf{X}, \mathbf{A}) \in \mathbb{R}^{N \times h}$.

\textbf{Partition-Induced Subgraph View.}
The partition matrix $\mathbf{S}$ dictates edge retention within subgraphs.
We dynamically construct a subgraph adjacency matrix $\mathbf{A}_{\text{sub}}$ by masking cross-partition edges:
\begin{equation}
    \mathbf{A}_{\text{sub}} = \mathbf{A} \odot (\mathbf{S} \mathbf{S}^\top),
\end{equation}
where $\odot$ denotes the Hadamard product.
Here, $(\mathbf{S}  \mathbf{S}^\top)_{ij}$ equals 1 if and only if nodes $i, j$ share the same partition, effectively preserving only intra-subgraph structures.
A partition encoder then extracts local features: $\mathbf{H}_{\text{partition}} = \text{Encoder}_{\text{partition}}(\mathbf{X}, \mathbf{A}_{\text{sub}}) \in \mathbb{R}^{N \times h}$.

\textbf{Quantum Parameter View.}
Parameters $\mathbf{P} \in [0, 2\pi)^{2p \times k}$ are broadcast to the node level via $\mathbf{X}_{\text{param}} = \mathbf{S}\mathbf{P}^\top \in [0, 2\pi)^{N \times 2p}$.
To respect the $2\pi$-periodicity of QAOA parameters, we apply sine-cosine embeddings:
$\tilde{\mathbf{X}}_{\text{param}} = \text{concat}[\sin(\mathbf{X}_{\text{param}}), \cos(\mathbf{X}_{\text{param}})]\in [-1, 1]^{N \times 4p}$.
These features are processed by a parameter encoder operating on $\mathbf{A}_{\text{sub}}$, as parameters function strictly within subgraphs: $\mathbf{H}_{\text{param}} = \text{Encoder}_{\text{param}}(\tilde{\mathbf{X}}_{\text{param}}, \mathbf{A}_{\text{sub}}) \in \mathbb{R}^{N \times h}$.

To predict the performance ratio $\hat{\rho}$, we aggregate the learned representations via global mean pooling (GMP) and concatenation: $\mathbf{H} = \text{concat}[ \text{GMP}(\mathbf{H}_{\text{topology}}), \text{GMP}(\mathbf{H}_{\text{partition}}), \text{GMP}(\mathbf{H}_{\text{param}}) ] \in \mathbb{R}^{3h}$.
Finally, an MLP maps $\mathbf{H} $ to the output.
To enforce the theoretical bounds derived in Proposition~\ref{prop:boundedness}, we scale the sigmoid activation:
$\hat{\rho} = 0.5 \cdot (\text{sigmoid}(\text{MLP}(\mathbf{H})) + 1)$.
This ensures the predicted ratio strictly falls within $[0.5, 1.0]$.

\subsection{Joint Generator}
\label{GEN:jointgenerator}
The joint generator $g_\theta(G)$ aims to generate a discrete partition matrix $\mathbf{S}$ constrained by qubit capacity, and the corresponding variational parameters $\mathbf{P}$.
We formulate this generation process via the probabilistic chain rule:
\begin{equation} P(\mathbf{S}, \mathbf{P} | G) = P(\mathbf{S} | G) \cdot P(\mathbf{P} | \mathbf{S}, G). \end{equation}
This factorization justifies our sequential ``partition-first, parameter-second'' architecture:
The partition $\mathbf{S}$ determines the local Hamiltonians, serving as the prerequisite for optimizing parameters $\mathbf{P}$.
As illustrated in Figure~\ref{fig:main_method}, the generator comprises three differentiable stages:

\textbf{Soft Partition via OCH.}
To model $P(\mathbf{S}|G)$, we first extract node embeddings via a topology encoder:
$\mathbf{H}_{\text{topology}} = \text{Encoder}_{\text{topology}}(\mathbf{X}, \mathbf{A}) \in \mathbb{R}^{N \times h}$.
Standard GNN training often results in indistinguishable soft probabilities due to embedding over-smoothing~\cite{li2018deeper} and objective degeneracy~\cite{ying2018hierarchical, bianchi2020spectral, tsitsulin2023graph}, hindering effective training.
To address this and construct discriminative soft partitions, we propose the orthogonal complement head (OCH).
OCH imposes a strong geometric inductive bias by enforcing the cluster centers $\mathbf{C} \in \mathbb{R}^{k \times h}$ to be mutually orthogonal and orthogonal to the global graph embedding $\boldsymbol{g} = \text{GMP}(\mathbf{H}_{\text{topology}}) \in \mathbb{R}^h$:
\begin{equation}
    \mathbf{C}\boldsymbol{g} = \mathbf{0} \quad \text{and} \quad \mathbf{C}\mathbf{C}^\top = \mathbf{I}.
\end{equation}
Specifically, $\mathbf{C}$ is dynamically constructed by orthogonalizing a randomly initialized matrix with respect to $\boldsymbol{g}$ (e.g., via QR decomposition,~\citealp{golub2013matrix}).
By constraining centers to the orthogonal complement of the global context, OCH maximizes inter-cluster separability in the embedding space.
The soft partition $\tilde{\mathbf{S}}$ is then derived via softmax similarity: $\tilde{\mathbf{S}}= \text{softmax}(\mathbf{H}_{\text{topology}}\mathbf{C}^\top) \in [0,1]^{N \times k}$.
Implementation details of OCH are provided in Appendix~\ref{app:och}.

\textbf{Differentiable Discretization.}
Converting soft probabilities $\tilde{\mathbf{S}}$ into a binary partition $\mathbf{S}$ requires strictly satisfying the hardware qubit constraint while maintaining differentiability.
First, to enforce the capacity limit, we employ a greedy capacity discretization (GCD) strategy.
Unlike Gumbel-Softmax~\cite{jang2017categorical, maddison2017the}, which fails to guarantee strict constraints, GCD iteratively assigns nodes based on descending probabilities to ensure $\mathbf{S} = \text{GCD}(\tilde{\mathbf{S}}) \in \{0,1\}^{N \times k} $ respects the qubit limit~(see Appendix~\ref{app:gcd} for GCD details).
Second, to enable backpropagation through the non-differentiable GCD, we apply the straight-through estimator (STE,~\citealp{bengio2013estimating}):
\begin{equation}
\nabla_{\tilde{\mathbf{S}}} f \approx \nabla_{\mathbf{S}} f.
\end{equation}
In the forward pass, we use the valid discrete $\mathbf{S}$ for evaluation; in the backward pass, we bypass the quantization error and propagate gradients directly to $\tilde{\mathbf{S}}$.
Finally, we apply the mask $\mathbf{A}_{\text{sub}} = \mathbf{A} \odot (\mathbf{S}\mathbf{S}^\top)$ to isolate subgraph topologies for the subsequent parameter generation stage.

\textbf{Parameter Generation.}
To model $P(\mathbf{P} | \mathbf{S}, G)$, we generate QAOA parameters conditioned on the discrete partition.
Crucially, to prevent parameter optimization from destabilizing the learned partition structure, we apply a stop-gradient (sg) operator to the input topology.
We first extract  node features $\mathbf{H}_{\text{partition}} = \text{Encoder}_{\text{partition}}(\mathbf{X}, \text{sg}(\mathbf{A}_{\text{sub}}))$ and pool them into subgraph embeddings $\mathbf{H}_{\text{sub}} = \text{GMP}(\mathbf{H}_{\text{partition}}) \in \mathbb{R}^{h \times k}$.
Finally, to respect $2\pi$-periodicity, an MLP maps these embeddings to Cartesian coordinates, followed by a differentiable arctangent transformation that converts them to the variational parameters $\mathbf{P} \in [0,2\pi)^{2p \times k}$.

\section{Experiments}
\label{sec:exp}

\subsection{Experimental Setup}
\label{sec:setup}
\textbf{Datasets.}
We evaluate Neural QAOA\textsuperscript{2} on a public benchmark library\footnote{http://bqp.cs.uni-bonn.de/library/html/instances.html},
which includes QUBO (datasets \textit{B}, \textit{BE}, \textit{GKA}),
Ising spin glass (datasets \textit{L}, \textit{BMZ}),
and MaxCut (datasets \textit{W}, \textit{HR}) instances.
All datasets contain multiple problem instances, with problem sizes $N$ exceeding the maximum qubit limit in our experiments ($\text{max\_nodes}=10$).

\textbf{Training Details.}
We select 80\% of instances from datasets \textit{B}, \textit{BE}, and \textit{W} (problem sizes range from 51 to 501),
which consist of QUBO and MaxCut instances only,
to form the training graph set $\mathcal{D}_{\text{graph}}$.
Ising spin glass instances are excluded from training.
To construct the labeled dataset $\mathcal{D}_{\text{offline}}= \{(G_i, \mathbf{S}_i, \mathbf{P}_i, \rho_i)\}$ for the quantum evaluator, we augment each $G_i \in \mathcal{D}_{\text{graph}}$ with partitions $\mathbf{S}_i$ derived from heuristic baselines and parameters $\mathbf{P}_i$ sampled from a uniform distribution, obtaining the ground-truth performance ratio $\rho_i$ via QAOA\textsuperscript{2} simulation.
The quantum evaluator $f_\phi$ is first trained via supervised learning on $\mathcal{D}_{\text{offline}}$.
Subsequently, the joint generator $g_\theta$ is trained on $\mathcal{D}_{\text{graph}}$ via gradient ascent, guided by the frozen $f_\phi$.
Dataset details and hyperparameters are provided in Appendix~\ref{app:settings}.

\textbf{Baselines.}
We benchmark our method against the standard QAOA\textsuperscript{2} framework equipped with four representative partitioning heuristics.
We include the two primary heuristics from the original QAOA\textsuperscript{2} study~\cite{zhou2023qaoa}:
(a)~\textbf{Random}, which assigns nodes to subgraphs stochastically, and
(b)~\textbf{Modularity}~\cite{clauset2004finding}, which maximizes graph modularity.
Additionally, we evaluate
(c)~\textbf{Boundary}~\cite{guerreschi2021solving} and
(d)~\textbf{KL}~\cite{kernighan1970efficient}, which minimize the number of boundary nodes and the cut size between partitions, respectively.
Following the protocol of the original QAOA\textsuperscript{2} study~\cite{zhou2023qaoa}, all baselines employ random initialization for variational parameters, sampled from a uniform distribution $\mathcal{U}[0, 2\pi)$,
with the quantum circuit depth $p$ set to 1.
We further benchmark against advanced parameter initialization strategies including INTERP~\cite{zhou2020quantum}, FOURIER~\cite{zhou2020quantum}, TQA~\cite{sack2021quantum}, and QIBPI~\cite{katial2025instance}, as well as deeper circuits ($p = 2, 3$).
These additional results are provided in Appendix~\ref{app:advanced-param-init-deeper-circuit}.

\textbf{Metrics.}
We evaluate performance using the average performance ratio ($\bar\rho$) and average rank.
For each problem instance, $\bar\rho$ is calculated as the average $\rho$ over 10 independent quantum simulation runs.
Methods are then ranked on each instance based on their respective $\bar\rho$, and we report the mean rank averaged across all testing instances.
Higher $\bar\rho$ and lower ranks indicate better performance.

\textbf{Code Availability.}
We release the source code of our implementation\footnote{https://github.com/0SliverBullet/Neural-QAOA-Squared} to facilitate reproducibility and future research.

\subsection{Main Results}
\label{sec:results_main}
Table~\ref{tab:main-results} compares Neural QAOA\textsuperscript{2} with heuristic baselines on 50 held-out test instances,
drawn from the remaining 20\% of the \textit{B}, \textit{BE}, and \textit{W} datasets,
with distributions and problem scales consistent with the training set.
In addition to the mean performance ratio $\bar{\rho}$, standard deviation (std), and average rank,
we also report the win/loss (w/l) count, which measures the number of instances on which a method attains the best performance (rank~1).
Overall, while the gains are not uniformly dominant across every single metric and dataset, Neural QAOA\textsuperscript{2} demonstrates strong and relatively consistent performance, proving to be broadly competitive and often the best method. It achieves the best average rank of 1.74,
winning on 28 out of 50 test instances.

\begin{table}[t]
  \centering
  \caption{\textbf{Main Results on 50 Test Instances.}
  Cell layout: mean $\bar{\rho}$ (top), $\pm$ std (middle), and average rank with w/l counts (bottom) across different datasets (\# denotes the instance count).
  \textbf{Bold} indicates the best performance  (mean $\bar{\rho} \uparrow$, rank $\downarrow$).
  Neural~QAOA\textsuperscript{2} consistently outperforms heuristics, achieving the best average rank of \textbf{1.74} overall.}
  \resizebox{0.48\textwidth}{!}{
    \begin{tabular}{l|ccccc}
    \toprule
    \multirow{2}{*}{Dataset~(\#)}  & \multicolumn{4}{c}{QAOA\textsuperscript{2} with Heuristic Partition} & Neural QAOA\textsuperscript{2} \\
    \cmidrule(lr){2-5} \cmidrule(l){6-6}
     & Random & Modularity & Boundary & KL & GEN \\
    \midrule

    \makecell[l]{\textit{B}~(8)}
    & \makecell{0.8047 \\ $\pm$ 0.0226 \\ (4.75)~0/8}
    & \makecell{0.8351 \\ $\pm$ 0.0304 \\ (2.38)~2/6}
    & \makecell{0.8246 \\ $\pm$ 0.0326 \\ (2.63)~1/7}
    & \makecell{0.8092 \\ $\pm$ 0.0177 \\ (3.75)~0/8}
    & \makecell{\textbf{0.8417} \\ $\pm$ 0.0256 \\ \textbf{(1.50)}~5/3} \\

    \makecell[l]{\textit{BE}~(16)}
    & \makecell{0.8626 \\ $\pm$ 0.0344 \\ (4.81)~0/16}
    & \makecell{0.8692 \\ $\pm$ 0.0342 \\ (3.13)~0/16}
    & \makecell{0.8722 \\ $\pm$ 0.0359 \\ (2.31)~1/15}
    & \makecell{0.8672 \\ $\pm$ 0.0342 \\ (3.69)~0/16}
    & \makecell{\textbf{0.8824} \\ $\pm$ 0.0304 \\ \textbf{(1.06)}~15/1} \\

    \makecell[l]{\textit{W}~(26)}
    & \makecell{0.8962 \\ $\pm$ 0.0543 \\ (3.23)~2/24}
    & \makecell{0.9137 \\ $\pm$ 0.0304 \\ \textbf{(2.23)}~9/17}
    & \makecell{0.9114 \\ $\pm$ 0.0335 \\ (2.96)~6/20}
    & \makecell{0.8934 \\ $\pm$ 0.0537 \\ (4.27)~1/25}
    & \makecell{\textbf{0.9153} \\ $\pm$ 0.0280 \\ \textbf{(2.23)}~8/18} \\

    \midrule

    Overall~(50)
    & \makecell{0.8708 \\ $\pm$ 0.0552 \\ (3.98)~2/48}
    & \makecell{0.8869 \\ $\pm$ 0.0437 \\ (2.54)~11/39}
    & \makecell{0.8850 \\ $\pm$ 0.0465 \\ (2.70)~8/42}
    & \makecell{0.8716 \\ $\pm$ 0.0529 \\ (4.00)~1/49}
    & \makecell{\textbf{0.8930} \\ $\pm$ 0.0390 \\ \textbf{(1.74)}~28/22} \\

    \bottomrule
    \end{tabular}
    }
  \label{tab:main-results}
\end{table}

\textbf{Performance on QUBO Problems.}
Neural QAOA\textsuperscript{2} exhibits particularly strong performance on QUBO instances.
On the \textit{BE} dataset, it achieves an average rank of 1.06, winning 15 out of 16 instances.
In contrast, modularity performs less favorably (rank 3.13),
as general QUBO problems often lack the explicit community structures that such heuristics exploit.
By learning partitioning strategies guided by the predicted quantum optimization performance,
our GEN effectively aligns partitioning decisions with the underlying Hamiltonian landscape.

\textbf{Performance on MaxCut.}
On the \textit{W} dataset (MaxCut), which naturally exhibits graph clustering structure,
modularity is particularly strong.
Neural QAOA\textsuperscript{2} is tied with this strongest heuristic baseline (both achieving an average rank of 2.23)
without relying on handcrafted graph-theoretic rules,
indicating that the GEN successfully captures relevant topological properties through data-driven learning.

\textbf{Stability.}
Beyond average performance, Neural QAOA\textsuperscript{2} demonstrates improved stability across test instances.
As reflected by lower standard deviations, the learned generator adapts to diverse graph structures,
mitigating the high variance often observed in heuristics that are sensitive to specific topologies.

Notably, our GEN is trained exclusively at $p=1$.
For deeper circuits ($p>1$), we employ the standard parameter expansion strategy from prior work~\cite{zhou2020quantum} rather than retraining a high-depth model from scratch.
As demonstrated in Appendix~\ref{app:advanced-param-init-deeper-circuit}, our method maintains competitive performance at $p=2$ and $p=3$, even outperforming QAOA\textsuperscript{2} equipped with stronger initialization baselines (TQA, INTERP, FOURIER, and QIBPI) overall.

\subsection{Distribution Generalization Analysis}
\label{sec:results_dist_gen}
To assess zero-shot generalization, we evaluate Neural QAOA\textsuperscript{2} on 93 instances from unseen distributions.
These instances share comparable problem sizes ($N \in [21, 501]$) with the training set but possess distinct topologies.
The test suite comprises 45 \textit{GKA} (QUBO) and 48 \textit{L} (Ising spin glass) instances.
Results in Table~\ref{tab:dist-gen-results} indicate effective generalization: Neural QAOA\textsuperscript{2} achieves the best overall average rank of 1.46 and the highest win rate (60/93).
\begin{table}[t]
  \centering
  \caption{\textbf{Distribution Generalization on 93 Test Instances.}
  We evaluate on two out-of-distribution graph topologies: \textit{GKA} (QUBO) and \textit{L} (Ising spin glass).
  Cell layout: mean $\bar{\rho}$ (top), $\pm$ std (middle), and average rank with w/l counts (bottom).
  \textbf{Bold} indicates the best metric.
  Neural~QAOA\textsuperscript{2} achieves the best average rank of \textbf{1.46} overall.
  }
  \resizebox{0.48\textwidth}{!}{
    \begin{tabular}{l|ccccc}
    \toprule
    \multirow{2}{*}{Dataset~(\#)} & \multicolumn{4}{c}{QAOA\textsuperscript{2} with Heuristic Partition} & Neural QAOA\textsuperscript{2} \\
    \cmidrule(lr){2-5} \cmidrule(l){6-6}
     & Random & Modularity & Boundary & KL & GEN \\
    \midrule

    \makecell[l]{\textit{GKA}~(45)}
    & \makecell{0.8478 \\  $\pm$ 0.0441 \\  (4.16)~2/43}
    & \makecell{0.8659 \\  $\pm$ 0.0364 \\  (2.40)~7/38}
    & \makecell{0.8601 \\  $\pm$ 0.0414 \\  (2.89)~4/41}
    & \makecell{0.8503 \\  $\pm$ 0.0393 \\  (4.04)~0/45}
    & \makecell{\textbf{0.8762} \\  $\pm$ 0.0359 \\ \textbf{(1.51)}~32/13} \\

    \makecell[l]{\textit{L}~(48)}
    & \makecell{0.6984 \\  $\pm$ 0.0305 \\  (4.65)~0/48}
    & \makecell{0.7391 \\  $\pm$ 0.0446 \\  (3.06)~0/48}
    & \makecell{\textbf{0.8205} \\  $\pm$ 0.0440 \\  (1.60)~20/28}
    & \makecell{0.7022 \\  $\pm$ 0.0235 \\  (4.27)~0/48}
    & \makecell{0.8160 \\  $\pm$ 0.0285 \\ \textbf{(1.42)}~28/20} \\

    \midrule

    Overall~(93)
    & \makecell{0.7707 \\  $\pm$ 0.0837 \\  (4.41)~2/91}
    & \makecell{0.8005 \\  $\pm$ 0.0754 \\  (2.74)~7/86}
    & \makecell{0.8397 \\  $\pm$ 0.0471 \\ \ (2.23)~24/69}
    & \makecell{0.7739 \\  $\pm$ 0.0807 \\ \ (4.16)~0/93}
    & \makecell{\textbf{0.8451} \\  $\pm$ 0.0441 \\ \textbf{(1.46)}~60/33} \\

    \bottomrule
    \end{tabular}
    }
  \label{tab:dist-gen-results}
\end{table}

\textbf{Performance on QUBO Problems.}
Our method performs best on general QUBO instances, achieving the highest mean $\bar{\rho}$ (0.8762) and winning 32 out of 45 instances.
It outperforms modularity (rank 2.40) and boundary (rank 2.89), highlighting the effectiveness of the generated partition.

\textbf{Performance on Ising Spin Glass.}
This class of instances is not seen during training.
The boundary heuristic performs well here (mean $\bar{\rho}$ 0.8205), as minimizing boundary vertices aligns naturally with the lattice topology of spin glass graphs.
In contrast, modularity, the runner-up on \textit{GKA}, generalizes less effectively to these grid structures (mean $\bar{\rho}$ 0.7391).
Neural~QAOA\textsuperscript{2} exhibits greater robustness:
despite a slightly lower mean $\bar{\rho}$ (0.8160) than boundary,
GEN achieves the best average rank (1.42) and secures the most wins (28/48), surpassing boundary (20 wins).
We emphasize average rank and win count in our evaluation because they better capture consistency and mitigate the influence of performance variations on specific instances.
This indicates that while heuristics such as boundary or modularity exhibit substantial performance variation depending on the graph topology, GEN consistently identifies high-quality partitions across diverse distributions.

\subsection{Size Generalization Analysis}
Size generalization requires a neural solver to handle instances across a broad spectrum of problem sizes unseen during training.
To validate this, we extend our evaluation to a suite of 183 instances with problem sizes $N$ ranging from 21 to 1000.
Supplementing the datasets in Sections~\ref{sec:results_main} and \ref{sec:results_dist_gen}, we incorporated two large-scale datasets:
30 MaxCut instances from \textit{HR} ($N=800,1000$) and 10 Ising spin glass instances from \textit{BMZ} ($N=1000$).
This setup allows us to assess performance on scales seen during training ($51 \le N \le 501$) as well as unseen scales, including small-scale interpolation ($N < 51$) and large-scale extrapolation ($N > 501$).
Figure~\ref{fig:scale_performance_rank} illustrates the performance trends.

\begin{figure}[t]
\centering
\includegraphics[width=\columnwidth]{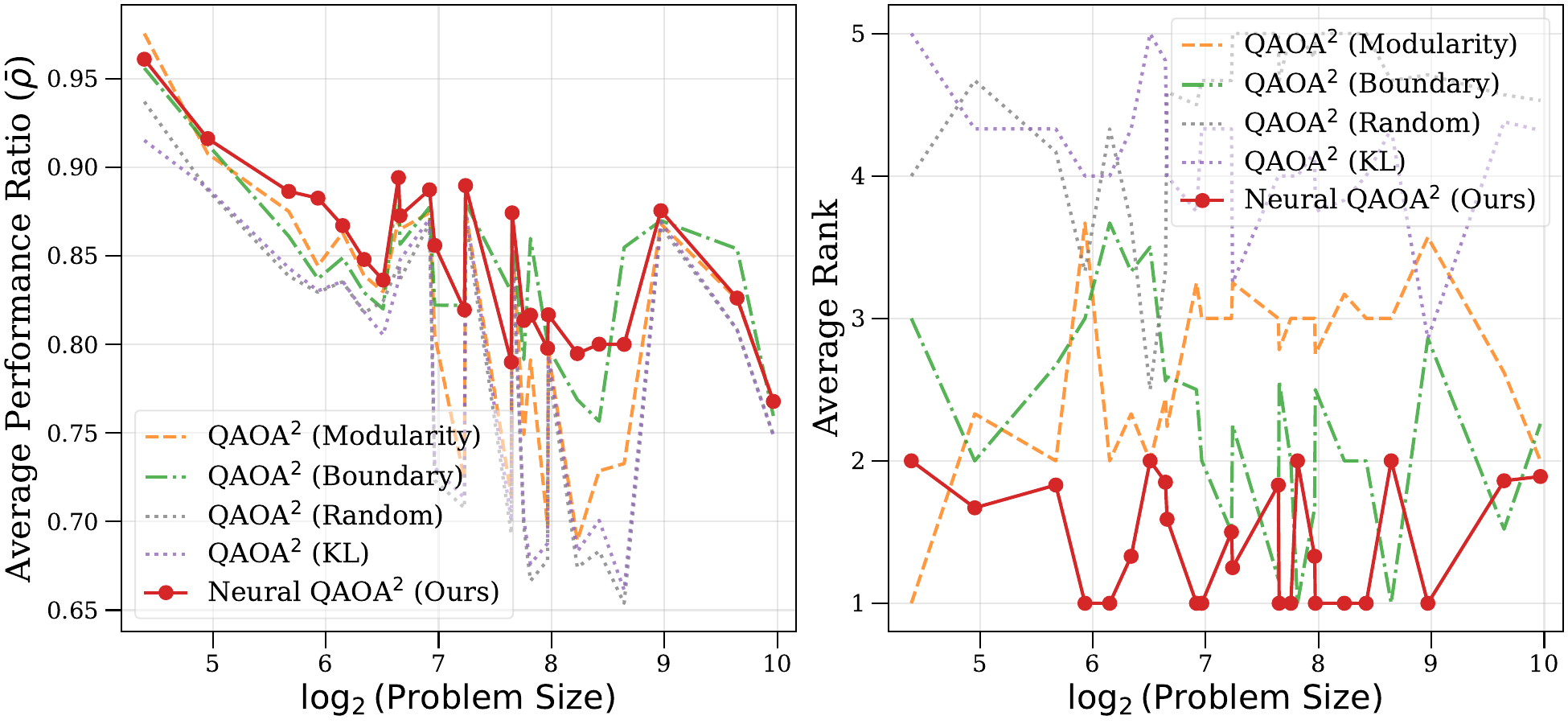}
\caption{\textbf{Problem Size Generalization.}
  We evaluate Neural~QAOA\textsuperscript{2} on test instances ranging from $N=21$ to $N=1000$ (plotted on a $\log_2$ scale).
  The left panel shows the mean $\bar{\rho}$, and the right panel shows the average rank.
  Our method (red line) maintains superior ranking stability across the spectrum, effectively generalizing to instances with scales unseen during training.
}
\label{fig:scale_performance_rank}
\end{figure}

\textbf{Performance on Seen Scales.}
In the training range ($51 \le N \le 501$), Neural QAOA\textsuperscript{2} establishes clear dominance.
The mean $\bar{\rho}$ curve (Left, red line) consistently surpasses most baselines, and our method achieves the best average rank (rank 1) in 9 out of 21 evaluated size points.

\textbf{Performance on Unseen Scales.}
Crucially, the model demonstrates strong generalization capabilities on unseen scales.
For interpolation ($N \in \{21, 31\}$), Neural~QAOA\textsuperscript{2} maintains a stable rank.
For extrapolation ($N \in \{800, 1000\}$), where heuristics often struggle due to the expanding search space, our method maintains superior ranking stability.
While absolute performance gaps naturally narrow at extreme scales, the consistently low rank indicates that GEN retains effective partitioning capability on graphs larger than those seen during training.

We also evaluate generalization across varying hardware qubit constraints ($\text{max\_nodes}$) in Appendix~\ref{app:hardware_gen}.

\subsection{Ablation Studies}
\label{sec:ablation}

To verify the effectiveness of our Neural~QAOA\textsuperscript{2}, we analyze the dynamics of test-time adaptation (TTA) and quantify the contribution of individual components.

\textbf{Effectiveness of Test-Time Adaptation.}
Figure~\ref{fig:tta_ablation} (Left) illustrates the evolution of the mean $\bar{\rho}$ as fine-tuning steps increase from 0 to 4096.
The monotonic upward trend validates the efficacy of the quantum evaluator.
Although the evaluator remains fixed during inference, its backpropagated gradients provide informative guidance, effectively directing the joint generator to refine the partition $\mathbf{S}$ and parameters $\mathbf{P}$ for specific instances.
Comparing the pre-trained model (red curve) against the non-pretrained variant (blue curve) reveals that pre-training the joint generator confers a significantly superior initialization (step 0), enabling Neural QAOA\textsuperscript{2} to reach high-quality configurations with fewer update steps.

\begin{figure}[t]
\centering
\includegraphics[width=\columnwidth]{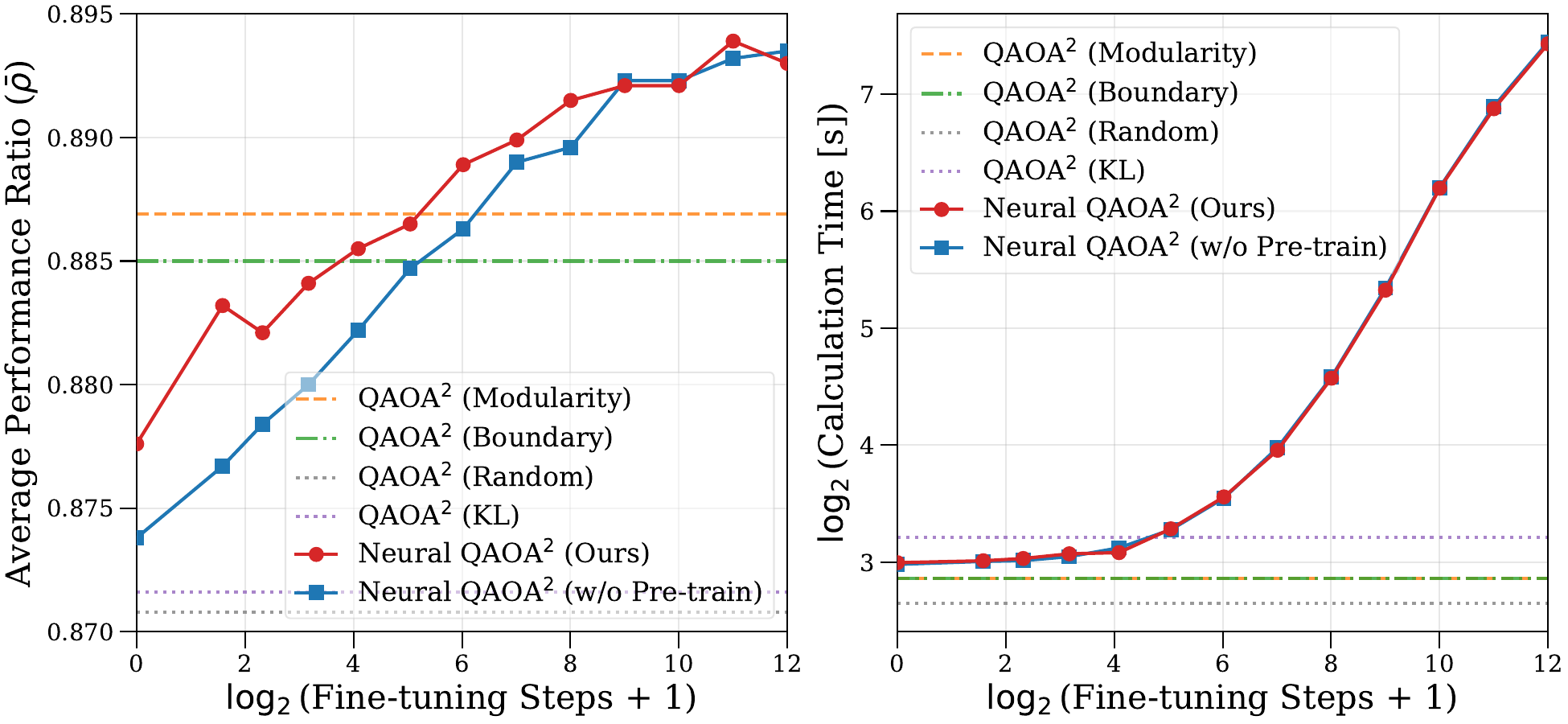}
\caption{\textbf{Dynamics of Test-Time Adaptation.}
Performance (left) and wall-clock time (right) vs. fine-tuning steps (plotted on a $\log_2$ scale).
The red curve shows monotonic improvement, outperforming the strongest heuristic (modularity) after 64 steps.
Comparison with the non-pretrained variant (blue) confirms that pre-training provides a critical initialization for rapid convergence.
}
\label{fig:tta_ablation}
\end{figure}

\textbf{Computational Efficiency.}
A critical consideration for neural solvers is the trade-off between performance and inference latency.
Figure~\ref{fig:tta_ablation} (Right) plots wall-clock time against fine-tuning steps.
Note that these wall-clock comparisons include the quantum circuit simulation cost uniformly across all methods.
While TTA incurs additional computational cost, the overhead is acceptable compared to the substantial performance gains.
In our practical inference protocol, we separate ceiling analysis from the recommended deployment budget: while adapting for 1,000 steps serves to estimate the empirical performance ceiling, 64 steps is defined as the default practical budget unless otherwise noted.
Notably, at this default budget of 64 fine-tuning steps, Neural~QAOA\textsuperscript{2} achieves a mean $\bar\rho$ of 0.8892, surpassing the best heuristic (modularity, 0.8869) and all other baselines.
Crucially, the total calculation time at this point is 11.78s, which remains within the same order of magnitude as modularity (7.27s), boundary (7.26s), and KL (9.26s).

\begin{table}[t]
  \centering
  \caption{\textbf{Ablation Study of Neural QAOA\textsuperscript{2} Components.}
  $\Delta\%$ denotes the relative performance drop compared to Neural~QAOA\textsuperscript{2}.
  Statistical significance is measured via a one-sided paired t-test.
  Differences are considered statistically significant at $p<0.01$.}
  \label{tab:ablation}
  \resizebox{0.48\textwidth}{!}{
  \begin{tabular}{l|ccc}
    \toprule
    Variant (Fine-tuning Steps = 64) & Mean $\bar{\rho}$ & $\Delta\%$ & $p$-value \\
    \midrule
    \quad Neural QAOA\textsuperscript{2} & 0.8892 & -- & -- \\
    \midrule
    \multicolumn{4}{l}{\textit{Ablation on Quantum Evaluator Inputs}} \\
    \quad w/o Global Topology View & 0.8819 & -0.83\% & $6.93\times10^{-4}$ \\
    \quad w/o Partition-Induced Subgraph View & 0.8797 & -1.07\% & $7.44\times10^{-7}$ \\
    \quad w/o Quantum Parameter View & 0.8821 & -0.80\% & $2.40\times10^{-4}$ \\
    \midrule
    \multicolumn{4}{l}{\textit{Ablation on Joint Generator Mechanisms}} \\
    \quad w/o Orthogonal Complement Head & 0.8817 & -0.85\% & $7.56\times10^{-5}$ \\
    \quad w/o Stop-Gradient & 0.8820 & -0.81\% & $1.22\times10^{-4}$ \\
    \bottomrule
  \end{tabular}
  }
\end{table}

\textbf{Component Contribution Analysis.}
We conduct a comprehensive ablation study to isolate the contribution of each module.
Table~\ref{tab:ablation} summarizes the performance degradation ($\Delta\%$) of each variant relative to the full Neural~QAOA\textsuperscript{2}.
All observed performance drops are statistically significant ($p < 0.01$), suggesting that each component contributes.

First, regarding evaluator inputs, the partition-induced subgraph view proves most critical ($\Delta = -1.07\%$), confirming that perceiving subgraph topology is paramount for accurate performance prediction.
The global topology and quantum parameter views contribute comparably ($\approx -0.8\%$), ensuring sufficient information encoding.
Second, regarding generator mechanisms, replacing the OCH with unconstrained learnable cluster centers (i.e., removing the orthogonality constraints) degrades performance by $0.85\%$.
This highlights that the dynamic orthogonalization in OCH is vital for enforcing inter-cluster separability.
Finally, removing the stop-gradient operator causes a $0.81\%$ drop, verifying that decoupling the gradients of partition and parameter generation is essential to prevent optimization interference.

\subsection{Study on the Effect of Noise}
Quantum devices are subject to various types of noise, such as bit-phase flips, depolarization, amplitude damping, and phase damping.
To evaluate the robustness of our method against realistic NISQ noise, we conduct experiments under noisy conditions using the 16 held-out \textit{BE} instances from Section~\ref{sec:results_main}.
Specifically, we introduce shot noise (using 1,000 shots) during the parameter optimization stage where it affects the gradient estimation.
Additionally, following~\cite{ye2023towards}, we add bit-phase flips as readout noise at the end of the circuit with an error probability $p$ ranging from 0 to 0.05.
The two noise models are evaluated separately.
The experimental results are summarized in Table~\ref{tab:noise_results}.
As shown, the average performance ratio $\bar{\rho}$ experiences only slight degradation even at a 5\% error rate, demonstrating that our method is not overly fragile to moderate noise.
\begin{table}[t]
  \centering
  \caption{\textbf{Performance Comparison under Different Noise Settings.} We examine shot noise (using 1,000 shots) and bit-phase flip errors with an error probability $p$ varying from 0 to 0.05.}
  \label{tab:noise_results}
  \begin{tabular}{lc}
    \toprule
    Noise Setting & Mean $\bar{\rho}$ \\
    \midrule
    Noiseless & 0.8798 \\
    Shot noise & 0.8733 \\
    Bit-phase flip ($p = 0.01$) & 0.8780 \\
    Bit-phase flip ($p = 0.05$) & 0.8763 \\
    \bottomrule
  \end{tabular}
\end{table}

\subsection{A Deeper Look Into Why It Works}
\label{sec:why_works}

We empirically validate the two motivations outlined in Section~\ref{sec:intro}:
resolving heuristic metric misalignment and
mitigating cold-start issues caused by topology-blind initialization.

\begin{figure}[t]
    \centering
    \includegraphics[width=\columnwidth]{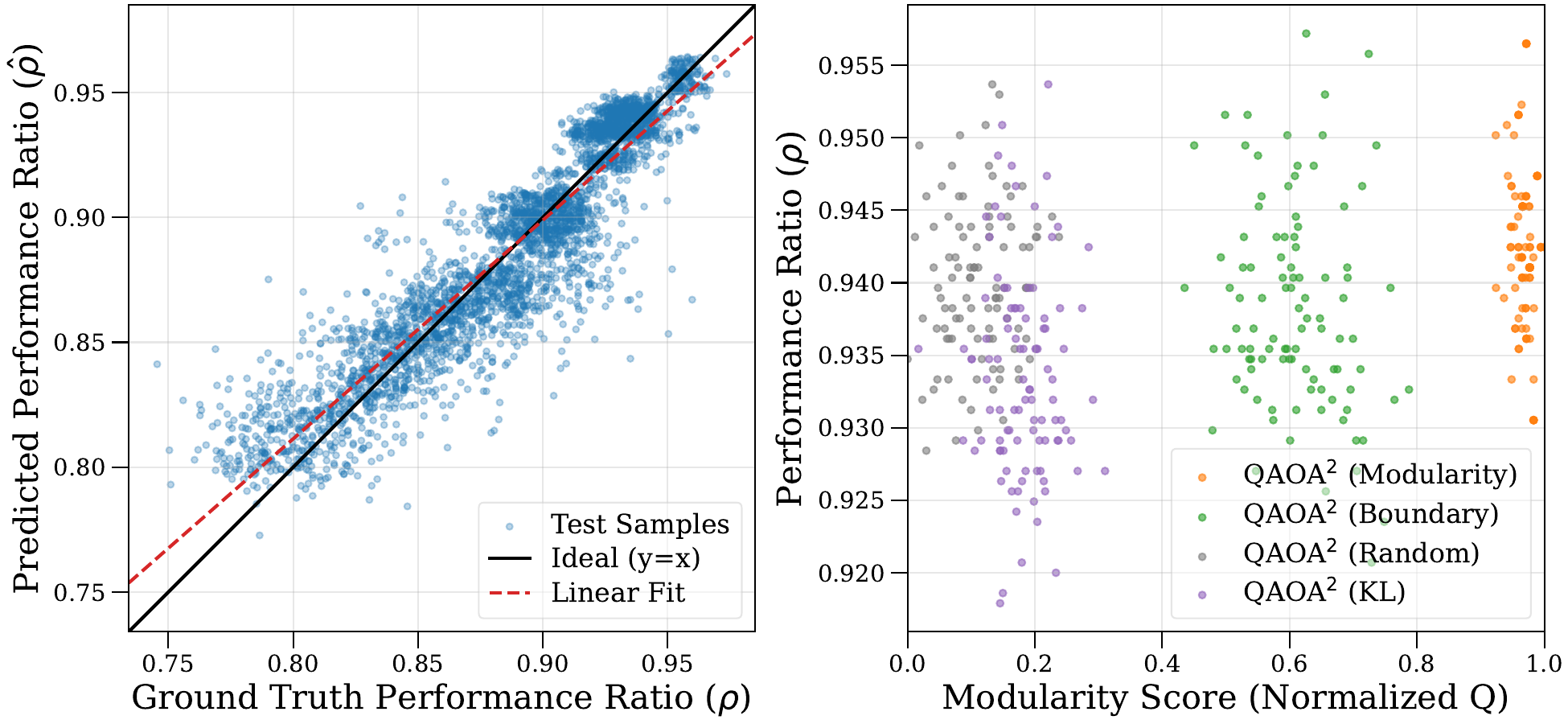}
    \caption{\textbf{Visual Analysis of Metric Misalignment.}
    Left: quantum evaluator predictions vs. ground truth across 3313 unseen test data points ($r=0.9258$), showing high fidelity.
    Right: modularity vs. ground truth on the test instance \textit{g05\_100.1} ($r=0.2859$), exposing a severe metric misalignment that lacks correlation with quantum performance.}
    \label{fig:metric_Misalignment}
\end{figure}

\textbf{Resolving Metric Misalignment.}
To verify the fidelity of our quantum evaluator against traditional heuristics, we analyze the correlation between predicted $\hat\rho$ and ground-truth performance ratio $\rho$.
Specifically, we use the quantum evaluator to predict the performance ratio $\hat\rho$ for 3313 test configurations $(G_i, \mathbf{S}_i, \mathbf{P}_i)$, where each graph $G_i$ is drawn from the 50 held-out instances in \textit{B}, \textit{BE}, and \textit{W} datasets.
As shown in Figure~\ref{fig:metric_Misalignment} (Left), the predictions from our evaluator exhibit a strong linear correlation with the ground truth (Pearson's $r = 0.9258$), with data points closely adhering to the $y=x$ identity line.
This confirms that the evaluator acts as a high-fidelity differentiable proxy, accurately guiding the generator toward goal-aligned partitions.
In contrast, Figure~\ref{fig:metric_Misalignment} (Right) reveals that modularity, a widely used graph partitioning metric, shows no significant correlation with quantum solution quality (Pearson's $r = 0.2859$) on the representative test instance \textit{g05\_100.1}.
The scatter plot resembles a random distribution, empirically proving that maximizing modularity does not guarantee better quantum optimization performance.
This metric misalignment fundamentally explains why heuristic baselines often fail to identify high-quality partitions for QAOA\textsuperscript{2}.
We further assess evaluator calibration on generator-produced configurations during TTA and observe consistent alignment with ground truth performance (Appendix~\ref{app:calibration}).

\textbf{Mitigating Cold Start.}
To quantify the benefit of our topology-aware parameter initialization, we isolate the impact of the generated parameters $\mathbf{P}$.
We compare Neural QAOA\textsuperscript{2} against a variant that uses the same learned partition $\mathbf{S}$ but reverts to random parameter initialization ($\gamma, \beta \sim \mathcal{U}[0, 2\pi)$).
As reported in Table~\ref{tab:analysis_initialization}, under the same optimization budget (20 steps), random initialization incurs a statistically significant performance drop ($\Delta = -0.67\%$).
Even when doubling the optimization budget to 40 steps, the random initialization variant still fails to match our method ($0.8863$ vs. $0.8892$), while incurring an additional $\approx 48\%$ computational overhead ($17.49$s vs. $11.78$s).
This result indicates that our generated parameters are not arbitrary; rather, the joint generator learns an intrinsic mapping from subgraph topology to favorable basins of attraction in the parameter landscape, conditioned on the partition.
By providing a warm start, it effectively avoids the poor local minima inherent to random initialization.

\begin{table}[t]
    \centering
    \caption{\textbf{Impact of Joint Parameter Initialization.}
    Comparison against a variant using random parameter initialization.
    Neural~QAOA\textsuperscript{2} (20 steps) outperforms random initialization even when the latter is given double the optimization budget (40 steps), confirming the effectiveness of our topology-aware warm start.
    Statistical significance is measured via a one-sided paired t-test.
    Differences are considered statistically significant at $p < 0.01$.
    }
    \label{tab:analysis_initialization}
      \resizebox{0.48\textwidth}{!}{
      \begin{tabular}{l|cccc}
        \toprule
        Variant (Fine-tuning Steps = 64) & Mean $\bar{\rho}$ & $\Delta\%$ & Time [s] & $p$-value \\
        \midrule
        \multicolumn{5}{l}{\textit{QAOA Optimization Steps = 20}} \\
        \quad Neural QAOA\textsuperscript{2} & 0.8892 & -- & 11.78 & -- \\
        \quad Neural QAOA\textsuperscript{2} (Random $\mathbf{P}$) & 0.8833 & -0.67\% & 11.75 & $1.89\times10^{-5}$ \\
        \midrule
        \multicolumn{5}{l}{\textit{QAOA Optimization Steps = 40}} \\
        \quad Neural QAOA\textsuperscript{2} (Random $\mathbf{P}$) & 0.8863 & -0.34\% & 17.49 & $8.22\times10^{-3}$ \\
        \bottomrule
      \end{tabular}
      }
\end{table}

\section{Conclusion}
\label{sec:concl}
We present Neural~QAOA\textsuperscript{2}, an end-to-end differentiable framework for scalable quantum combinatorial optimization.
By integrating a generative evaluative network (GEN), our approach aligns partitioning decisions with quantum objectives and alleviates optimization cold starts via topology-aware initialization.
Through specialized differentiable mechanisms, GEN enables gradient-based learning over discrete graph structures under strict qubit capacity constraints.
Extensive experiments across diverse benchmarks show consistent improvements over heuristic baselines and strong generalization to unseen distributions and scales.

\section{Limitations}
\label{sec:limitations}
Despite these advancements, our current evaluation is limited to noiseless simulation.
Future work will incorporate hardware-specific noise models directly into the quantum evaluator.
Additionally, while our framework effectively optimizes QAOA\textsuperscript{2}-compatible settings, its mechanism inherently relies on the unconstrained binary structure and $\mathbb{Z}_2$ symmetry, which limits its direct application to strongly constrained problems like TSP or MIS.
Extending gradient-driven divide-and-conquer paradigms to handle such hard constraints remains a promising future direction.
Furthermore, exploring meta-learning approaches (e.g., MAML) to unify our two-stage distribution-level learning presents an exciting avenue (see Appendix~\ref{app:additional_discussion}).
More broadly, our approach depends on an offline dataset, although the acquisition cost is relatively low (see Appendix~\ref{app:datasets});
reinforcement learning may offer an alternative training paradigm.

\section*{Acknowledgements}
This work was supported by National Key Research and Development Program of China under Grant 2022YFA1004102, and in part by National Natural Science Foundation of China under Grant
62502192, and in part by National Natural Science Foundation of China under Grant 62272210.

\section*{Impact Statement}
This paper presents work whose goal is to advance the field of Machine Learning.
Specifically, our work aims to empower quantum computing via classical deep learning.
By introducing a generative evaluative network to jointly generate graph partitioning and initialize quantum circuit parameters, we significantly enhance the scalability and performance of quantum algorithms for solving large-scale combinatorial optimization problems.
We have open-sourced our method, which will contribute to the community of quantum machine learning and optimization research.
There are many potential societal consequences of our work, none which we feel must be specifically highlighted here.

\bibliography{example_paper}

@article{preskill2018quantum,
  title={Quantum computing in the {NISQ} era and beyond},
  author={Preskill, John},
  journal={Quantum},
  volume={2},
  pages={79},
  year={2018}
}

@article{farhi2014quantum,
  title={A quantum approximate optimization algorithm},
  author={Farhi, Edward and Goldstone, Jeffrey and Gutmann, Sam},
  journal={arXiv preprint arXiv:1411.4028},
  year={2014}
}

@article{lucas2014ising,
  title={Ising formulations of many {NP} problems},
  author={Lucas, Andrew},
  journal={Frontiers in Physics},
  volume={2},
  pages={5},
  year={2014}
}

@article{harrigan2021quantum,
  title={Quantum approximate optimization of non-planar graph problems on a planar superconducting processor},
  author={Harrigan, Matthew P and Sung, Kevin J and Neeley, Matthew and Satzinger, Kevin J and Arute, Frank and Arya, Kunal and Atalaya, Juan and Bardin, Joseph C and Barends, Rami and Boixo, Sergio and others},
  journal={Nature Physics},
  volume={17},
  number={3},
  pages={332--336},
  year={2021}
}

@article{kim2023evidence,
  title={Evidence for the utility of quantum computing before fault tolerance},
  author={Kim, Youngseok and Eddins, Andrew and Anand, Sajant and Wei, Ken Xuan and Van Den Berg, Ewout and Rosenblatt, Sami and Nayfeh, Hasan and Wu, Yantao and Zaletel, Michael and Temme, Kristan and others},
  journal={Nature},
  volume={618},
  number={7965},
  pages={500--505},
  year={2023}
}

@article{google2025observation,
  title={Observation of constructive interference at the edge of quantum ergodicity},
  author={{Google Quantum AI and Collaborators}},
  journal={Nature},
  volume={646},
  number={8086},
  pages={825--830},
  year={2025}
}

@article{guerreschi2021solving,
  title={Solving quadratic unconstrained binary optimization with divide-and-conquer and quantum algorithms},
  author={Guerreschi, Gian Giacomo},
  journal={arXiv preprint arXiv:2101.07813},
  year={2021}
}

@article{li2022large,
  title={Large-scale quantum approximate optimization via divide-and-conquer},
  author={Li, Junde and Alam, Mahabubul and Ghosh, Swaroop},
  journal={IEEE Transactions on Computer-Aided Design of Integrated Circuits and Systems},
  volume={42},
  number={6},
  pages={1852--1860},
  year={2022}
}

@article{zhou2023qaoa,
  title={{QAOA-in-QAOA:} solving large-scale {MaxCut} problems on small quantum machines},
  author={Zhou, Zeqiao and Du, Yuxuan and Tian, Xinmei and Tao, Dacheng},
  journal={Physical Review Applied},
  volume={19},
  number={2},
  pages={024027},
  year={2023}
}

@article{clauset2004finding,
  title={Finding community structure in very large networks},
  author={Clauset, Aaron and Newman, Mark EJ and Moore, Cristopher},
  journal={Physical Review E—Statistical, Nonlinear, and Soft Matter Physics},
  volume={70},
  number={6},
  pages={066111},
  year={2004}
}

@article{katial2025instance,
  title={On the instance dependence of parameter initialization for the quantum approximate optimization algorithm: insights via instance space analysis},
  author={Katial, Vivek and Smith-Miles, Kate and Hill, Charles and Hollenberg, Lloyd},
  journal={INFORMS Journal on Computing},
  volume={37},
  number={1},
  pages={146--171},
  year={2025}
}

@inproceedings{liang2024graph,
  title={Graph learning for parameter prediction of quantum approximate optimization algorithm},
  author={Liang, Zhiding and Liu, Gang and Liu, Zheyuan and Cheng, Jinglei and Hao, Tianyi and Liu, Kecheng and Ren, Hang and Song, Zhixin and Liu, Ji and Ye, Fanny and others},
  booktitle={Proceedings of the 61st ACM/IEEE Design Automation Conference},
  pages={1--4},
  year={2024}
}

@article{bravyi2020obstacles,
  title={Obstacles to variational quantum optimization from symmetry protection},
  author={Bravyi, Sergey and Kliesch, Alexander and Koenig, Robert and Tang, Eugene},
  journal={Physical Review Letters},
  volume={125},
  number={26},
  pages={260505},
  year={2020}
}

@inproceedings{bach2024mlqaoa,
  title={{MLQAOA:} Graph Learning Accelerated Hybrid Quantum-Classical Multilevel {QAOA}},
  author={Bach, Bao and Falla, Jose and Safro, Ilya},
  booktitle={2024 IEEE International Conference on Quantum Computing and Engineering (QCE)},
  volume={1},
  pages={1--12},
  year={2024},
  organization={IEEE}
}

@article{kernighan1970efficient,
  title={An efficient heuristic procedure for partitioning graphs},
  author={Kernighan, Brian W and Lin, Shen},
  journal={The Bell System Technical Journal},
  volume={49},
  number={2},
  pages={291--307},
  year={1970}
}

@article{zhou2020quantum,
  title={Quantum approximate optimization algorithm: Performance, mechanism, and implementation on near-term devices},
  author={Zhou, Leo and Wang, Sheng-Tao and Choi, Soonwon and Pichler, Hannes and Lukin, Mikhail D},
  journal={Physical Review X},
  volume={10},
  number={2},
  pages={021067},
  year={2020}
}

@article{gemeinhardt2023quantum,
  title={Quantum Combinatorial Optimization in the {NISQ} Era: {A} Systematic Mapping Study},
  author={Gemeinhardt, Felix and Garmendia, Antonio and Wimmer, Manuel and Weder, Benjamin and Leymann, Frank},
  journal={ACM Computing Surveys},
  volume={56},
  number={3},
  pages={1--36},
  year={2023},
  publisher={ACM New York, NY}
}

@inproceedings{khalil2017learning,
  title={Learning combinatorial optimization algorithms over graphs},
  author={Khalil, Elias and Dai, Hanjun and Zhang, Yuyu and Dilkina, Bistra and Song, Le},
  booktitle={Advances in Neural Information Processing Systems},
  volume={30},
  year={2017}
}

@article{schuetz2022combinatorial,
  title={Combinatorial optimization with physics-inspired graph neural networks},
  author={Schuetz, Martin JA and Brubaker, J Kyle and Katzgraber, Helmut G},
  journal={Nature Machine Intelligence},
  volume={4},
  number={4},
  pages={367--377},
  year={2022},
  publisher={Nature Publishing Group UK London}
}

@inproceedings{ichikawa2024controlling,
  title={Controlling continuous relaxation for combinatorial optimization},
  author={Ichikawa, Yuma},
  booktitle={Advances in Neural Information Processing Systems},
  volume={37},
  pages={47189--47216},
  year={2024}
}

@inproceedings{abate2025maxcutpool,
title={{MaxCutPool}: differentiable feature-aware {Maxcut} for pooling in graph neural networks},
author={Carlo Abate and Filippo Maria Bianchi},
booktitle={Proceedings of the Thirteenth International Conference on Learning Representations},
year={2025}
}

@article{shen2025free,
  author       = {Zi{-}Song Shen and
                  Feng Pan and
                  Yao Wang and
                  Yi{-}Ding Men and
                  Wen{-}Biao Xu and
                  Man{-}Hong Yung and
                  Pan Zhang},
  title        = {Free-energy machine for combinatorial optimization},
  journal      = {Nature Computational Science},
  volume       = {5},
  number       = {4},
  pages        = {322--332},
  year         = {2025}
}

@inproceedings{
ichikawa2025optimization,
title={Optimization by Parallel Quasi-Quantum Annealing with Gradient-Based Sampling},
author={Yuma Ichikawa and Yamato Arai},
booktitle={Proceedings of the Thirteenth International Conference on Learning Representations},
year={2025}
}

@inproceedings{qian2024mg,
  title={{MG-Net}: Learn to Customize {QAOA} with Circuit Depth Awareness},
  author={Qian, Yang and Wang, Xinbiao and Du, Yuxuan and Luo, Yong and Tao, Dacheng},
  booktitle={Advances in Neural Information Processing Systems},
  volume={37},
  pages={33691--33725},
  year={2024}
}

@inproceedings{ye2023towards,
  author       = {Xinyu Ye and
                  Ge Yan and
                  Junchi Yan},
  title        = {Towards Quantum Machine Learning for Constrained Combinatorial Optimization:
                  a Quantum {QAP} Solver},
  booktitle    = {Proceedings of the 40th International Conference on Machine Learning},
  volume       = {202},
  pages        = {39903--39912},
  publisher    = {{PMLR}},
  year         = {2023}
}

@article{glover2022quantum,
  title={Quantum bridge analytics I: a tutorial on formulating and using {QUBO} models},
  author={Glover, Fred and Kochenberger, Gary and Hennig, Rick and Du, Yu},
  journal={Annals of Operations Research},
  volume={314},
  number={1},
  pages={141--183},
  year={2022}
}

@article{bengio2013estimating,
  author       = {Yoshua Bengio},
  title        = {Estimating or Propagating Gradients Through Stochastic Neurons},
  journal      = {arXiv preprint arXiv:1305.2982},
  year         = {2013}
}

@book{golub2013matrix,
  title     = {Matrix Computations},
  author    = {Golub, Gene H. and Van Loan, Charles F.},
  edition   = {4},
  year      = {2013},
  publisher = {Johns Hopkins University Press},
  address   = {Baltimore, MD},
  isbn      = {978-1-4214-0794-4}
}

@inproceedings{bianchi2020spectral,
  author       = {Filippo Maria Bianchi and
                  Daniele Grattarola and
                  Cesare Alippi},
  title        = {Spectral Clustering with Graph Neural Networks for Graph Pooling},
  booktitle    = {Proceedings of the 37th International Conference on Machine Learning},
  volume       = {119},
  pages        = {874--883},
  publisher    = {{PMLR}},
  year         = {2020}
}

@inproceedings{ying2018hierarchical,
  author       = {Zhitao Ying and
                  Jiaxuan You and
                  Christopher Morris and
                  Xiang Ren and
                  William L. Hamilton and
                  Jure Leskovec},
  title        = {Hierarchical Graph Representation Learning with Differentiable Pooling},
  booktitle    = {Advances in Neural Information Processing Systems},
  pages        = {4805--4815},
  year         = {2018}
}

@inproceedings{li2018deeper,
  author       = {Qimai Li and
                  Zhichao Han and
                  Xiao{-}Ming Wu},
  title        = {Deeper Insights Into Graph Convolutional Networks for Semi-Supervised
                  Learning},
  booktitle    = {Proceedings of the Thirty-Second {AAAI} Conference on Artificial Intelligence},
  pages        = {3538--3545},
  year         = {2018}
}

@article{tsitsulin2023graph,
  title={Graph clustering with graph neural networks},
  author={Tsitsulin, Anton and Palowitch, John and Perozzi, Bryan and M{\"u}ller, Emmanuel},
  journal={Journal of Machine Learning Research},
  volume={24},
  number={127},
  pages={1--21},
  year={2023}
}

@inproceedings{jang2017categorical,
  author       = {Eric Jang and
                  Shixiang Gu and
                  Ben Poole},
  title        = {Categorical Reparameterization with Gumbel-Softmax},
  booktitle    = {Proceedings of the 5th International Conference on Learning Representations},
  year         = {2017}
}

@inproceedings{maddison2017the,
  author       = {Chris J. Maddison and
                  Andriy Mnih and
                  Yee Whye Teh},
  title        = {The Concrete Distribution: {A} Continuous Relaxation of Discrete Random
                  Variables},
  booktitle    = {Proceedings of the 5th International Conference on Learning Representations},
  year         = {2017}
}

@inproceedings{kipf2017semisupervised,
  author       = {Thomas N. Kipf and
                  Max Welling},
  title        = {Semi-Supervised Classification with Graph Convolutional Networks},
  booktitle    = {Proceedings of the 5th International Conference on Learning Representations},
  year         = {2017}
}

@inproceedings{brody2022how,
  author       = {Shaked Brody and
                  Uri Alon and
                  Eran Yahav},
  title        = {How Attentive are Graph Attention Networks?},
  booktitle    = {Proceedings of the Tenth International Conference on Learning Representations},
  year         = {2022}
}

@article{onnela2005intensity,
  title={Intensity and coherence of motifs in weighted complex networks},
  author={Onnela, Jukka-Pekka and Saram{\"a}ki, Jari and Kert{\'e}sz, J{\'a}nos and Kaski, Kimmo},
  journal={Physical Review E—Statistical, Nonlinear, and Soft Matter Physics},
  volume={71},
  number={6},
  pages={065103},
  year={2005},
  publisher={APS}
}

@article{saramaki2007generalizations,
  title={Generalizations of the clustering coefficient to weighted complex networks},
  author={Saram{\"a}ki, Jari and Kivel{\"a}, Mikko and Onnela, Jukka-Pekka and Kaski, Kimmo and Kertesz, Janos},
  journal={Physical Review E—Statistical, Nonlinear, and Soft Matter Physics},
  volume={75},
  number={2},
  pages={027105},
  year={2007},
  publisher={APS}
}

@techreport{page1999pagerank,
  title={The {PageRank} citation ranking: Bringing order to the web.},
  author={Page, Lawrence and Brin, Sergey and Motwani, Rajeev and Winograd, Terry},
  year={1999},
  institution={Stanford infolab}
}

@article{freeman1977set,
  title={A set of measures of centrality based on betweenness},
  author={Freeman, Linton C},
  journal={Sociometry},
  pages={35--41},
  year={1977},
  publisher={JSTOR}
}

@inproceedings{ye2025designing,
  title={On Designing General and Expressive Quantum Graph Neural Networks with Applications to {MILP} Instance Representation},
  author={Ye, Xinyu and Xiong, Hao and Huang, Jianhao and Chen, Ziang and Wang, Jia and Yan, Junchi},
  booktitle={Proceedings of the Thirteenth International Conference on Learning Representations},
  year={2025}
}

@inproceedings{loshchilov2018decoupled,
  author       = {Ilya Loshchilov and
                  Frank Hutter},
  title        = {Decoupled Weight Decay Regularization},
  booktitle    = {Proceedings of the 7th International Conference on Learning Representations},
  year         = {2019}
}

@article{jain2022graph,
  title={Graph neural network initialisation of quantum approximate optimisation},
  author={Jain, Nishant and Coyle, Brian and Kashefi, Elham and Kumar, Niraj},
  journal={Quantum},
  volume={6},
  pages={861},
  year={2022}
}

@article{sack2021quantum,
  title={Quantum annealing initialization of the quantum approximate optimization algorithm},
  author={Sack, Stefan H and Serbyn, Maksym},
  journal={Quantum},
  volume={5},
  pages={491},
  year={2021}
}

@inproceedings{liu2023bridging,
  author       = {Liyuan Liu and
                  Chengyu Dong and
                  Xiaodong Liu and
                  Bin Yu and
                  Jianfeng Gao},
  title        = {Bridging Discrete and Backpropagation: Straight-Through and Beyond},
  booktitle    = {Advances in Neural Information Processing Systems},
  volume={36},
  pages={12291--12311},
  year         = {2023}
}

@inproceedings{mena2018learning,
  author       = {Gonzalo E. Mena and
                  David Belanger and
                  Scott W. Linderman and
                  Jasper Snoek},
  title        = {Learning Latent Permutations with Gumbel-Sinkhorn Networks},
  booktitle    = {Proceedings of the 6th International Conference on Learning Representations},
  year         = {2018}
}

@inproceedings{amos2017optnet,
  author       = {Brandon Amos and
                  J. Zico Kolter},
  title        = {OptNet: Differentiable Optimization as a Layer in Neural Networks},
  booktitle    = {Proceedings of the 34th International Conference on Machine Learning},
  volume       = {70},
  pages        = {136--145},
  publisher    = {{PMLR}},
  year         = {2017}
}

@inproceedings{agrawal2019differentiable,
  author       = {Akshay Agrawal and
                  Brandon Amos and
                  Shane T. Barratt and
                  Stephen P. Boyd and
                  Steven Diamond and
                  J. Zico Kolter},
  title        = {Differentiable Convex Optimization Layers},
  booktitle    = {Advances in Neural Information Processing Systems},
  volume={32},
  pages        = {9558--9570},
  year         = {2019}
}

@article{kuhn1955hungarian,
  title={The Hungarian method for the assignment problem},
  author={Kuhn, Harold W},
  journal={Naval Research Logistics Quarterly},
  volume={2},
  number={1-2},
  pages={83--97},
  year={1955},
  publisher={Wiley Online Library}
}

@article{bergholm2018pennylane,
  title={Pennylane: Automatic differentiation of hybrid quantum-classical computations},
  author={Bergholm, Ville and Izaac, Josh and Schuld, Maria and Gogolin, Christian and Ahmed, Shahnawaz and Ajith, Vishnu and Alam, M Sohaib and Alonso-Linaje, Guillermo and AkashNarayanan, B and Asadi, Ali and others},
  journal={arXiv preprint arXiv:1811.04968},
  year={2018}
}

@article{jones2020efficient,
  title={Efficient calculation of gradients in classical simulations of variational quantum algorithms},
  author={Jones, Tyson and Gacon, Julien},
  journal={arXiv preprint arXiv:2009.02823},
  year={2020}
}

@article{goemans1995improved,
    author = {Goemans, Michel X. and Williamson, David P.},
    title = {Improved approximation algorithms for maximum cut and satisfiability problems using semidefinite programming},
    year = {1995},
    volume = {42},
    number = {6},
    journal = {Journal of the ACM},
    pages = {1115–1145},
    numpages = {31}
}

@article{liu2023howgood,
  author    = {Shengcai Liu and
               Yu Zhang and
               Ke Tang and
               Xin Yao},
  title     = {How Good is Neural Combinatorial Optimization? {A} Systematic Evaluation
               on the Traveling Salesman Problem},
  journal   = {IEEE Computational Intelligence Magazine},
  volume    = {18},
  number    = {3},
  pages     = {14--28},
  year      = {2023},
}

@inproceedings{liu2024large,
  author    = {Shengcai Liu and
               Caishun Chen and
               Xinghua Qu and
               Ke Tang and
               Yew{-}Soon Ong},
  title     = {Large Language Models as Evolutionary Optimizers},
  booktitle = {Proceedings of the {IEEE} Congress on Evolutionary Computation, {CEC}'2024},
  address = {Yokohama, Japan},
  pages     = {1--8},
  year      = {2024},
}

@article{lv2025cascaded,
  title={Cascaded large-scale tsp solving with unified neural guidance: Bridging local and population-based search},
  author={Lv, Haoze and Chen, Wenjie and Wang, Zhiyuan and Liu, Shengcai},
  journal={arXiv preprint arXiv:2501.14285},
  year={2025}
}

@article{zhang2025llm,
  title={Llm-driven instance-specific heuristic generation and selection},
  author={Zhang, Shaofeng and Liu, Shengcai and Lu, Ning and Wu, Jiahao and Liu, Ji and Ong, Yew-Soon and Tang, Ke},
  journal={arXiv preprint arXiv:2506.00490},
  year={2025}
}

@inproceedings{JiangSQLZZY25,
  author       = {Caigao Jiang and
                  Xiang Shu and
                  Hong Qian and
                  Xingyu Lu and
                  Jun Zhou and
                  Aimin Zhou and
                  Yang Yu},
  title        = {{LLMOPT:} Learning to Define and Solve General Optimization Problems
                  from Scratch},
  booktitle    = {Proceedings of the Thirteenth International Conference on Learning Representations (ICLR)},
  address = {Singapore, Singapore},
  year         = {2025}
}
\bibliographystyle{icml2026}

\newpage
\appendix
\onecolumn

\section{Methodological Details}

\subsection{Orthogonal Complement Head (OCH) Implementation}
\label{app:och}

In this section, we provide the implementation details of the orthogonal complement head (OCH).
As introduced in Section~\ref{GEN:jointgenerator}, OCH generates the cluster center matrix $\mathbf{C} \in \mathbb{R}^{k \times h}$ deterministically to satisfy two geometric constraints:
(1) Cluster centers must be orthogonal to the global mean ($\boldsymbol{c}_i\boldsymbol{g}=0$), ensuring clustering is based on structural deviations rather than commonality;
(2) Cluster centers must be mutually orthogonal ($\boldsymbol{c}_i\boldsymbol{c}_j^\top = 0, \forall i \neq j$), where $\boldsymbol{c}_i$ is the $i$-th row of $\mathbf{C}$.

Let $\mathbf{H}_{\text{topology}} \in \mathbb{R}^{N \times h}$ denote the node embeddings generated by the topology encoder. The construction of the soft partition $\tilde{\mathbf{S}}$ follows a four-step process:

\paragraph{Step 1: Global Context Normalization.}
First, we compute the global context vector $\boldsymbol{g} \in \mathbb{R}^h$ to capture the dominant direction of the graph embeddings.
To ensure numerical stability and directional consistency, we utilize the normalized mean of normalized embeddings:
\begin{equation}
       \boldsymbol{g} = \frac{\tilde{\boldsymbol{g}}}{\|\tilde{\boldsymbol{g}}\|_2},\quad \text{ where } \tilde{\boldsymbol{g}} = \frac{1}{N} \sum_{i=1}^N \frac{\boldsymbol{h}_i^\top}{\|\boldsymbol{h}_i\|_2}.
\end{equation}
Here $\boldsymbol{h}_i$ is the $i$-th row of $\mathbf{H}_{\text{topology}}$. The resulting $\boldsymbol{g}$ lies on the unit hypersphere $\mathbb{S}^{h-1}$.

\paragraph{Step 2: Stochastic Anchor Sampling.}
To generate diverse basis vectors orthogonal to $\boldsymbol{g}$, we maintain a frozen pool of stochastic anchors $\mathbf{P}_{\text{pool}} \in \mathbb{R}^{k_{\text{max}} \times h}$, sampled from a standard normal distribution $\mathcal{N}(0, 1)$ at initialization.
For a target subgraph number $k$, we slice the first $k$ vectors from the pool to form the auxiliary anchor matrix $\mathbf{A}_k \in \mathbb{R}^{k \times h}$.
Using a fixed ordered slice ensures that the subspace for a smaller $k$ is a strict subset of the subspace for a larger $k$, stabilizing training dynamics.

\paragraph{Step 3: Orthogonal Basis Generation via QR Decomposition.}
We construct a raw basis matrix $\mathbf{M} \in \mathbb{R}^{h \times (k+1)}$ by augmenting the global context vector $\boldsymbol{g}$ with the transpose of the anchor matrix $\mathbf{A}_k$:
\begin{equation}
    \mathbf{M} = \left[ \boldsymbol{g} \mid \mathbf{A}_k^\top \right] = \left[ \boldsymbol{g} \mid \boldsymbol{a}_1^\top, \dots, \boldsymbol{a}_k^\top \right],
\end{equation}
where $\boldsymbol{g}$ serves as the first column, and $\boldsymbol{a}_i$ is the $i$-th row of $\mathbf{A}_k$.
We then perform a reduced QR decomposition on $\mathbf{M}$:
\begin{equation}
    \mathbf{M} = \mathbf{Q} \mathbf{R}, \quad \text{where } \mathbf{Q} \in \mathbb{R}^{h \times (k+1)} \text{ and } \mathbf{Q}^\top \mathbf{Q} = \mathbf{I}.
\end{equation}
Due to the sequential nature of the Gram-Schmidt process inherent in QR decomposition~\cite{golub2013matrix}, the first column of $\mathbf{Q}$, denoted as $\boldsymbol{q}_0$, aligns with $\boldsymbol{g}$ (i.e., $\boldsymbol{q}_0 = \boldsymbol{g}$).
The subsequent columns $\{\boldsymbol{q}_1, \dots, \boldsymbol{q}_k\}$ span a subspace that is strictly orthogonal to $\boldsymbol{g}$ and mutually orthonormal.

\paragraph{Step 4: Cluster Center Extraction and Soft Partition.}
We discard the first column $\boldsymbol{q}_0$ (which corresponds to the global mean) and collect the remaining orthogonal vectors to form the cluster center matrix $\mathbf{C}$:
\begin{equation}
    \mathbf{C} = \left[ \boldsymbol{q}_1, \dots, \boldsymbol{q}_k \right]^\top \in \mathbb{R}^{k \times h}.
\end{equation}
By construction, the rows of $\mathbf{C}$ satisfy:
\begin{itemize}
    \item Orthogonality to global context: $\boldsymbol{c}_i\boldsymbol{g} = \boldsymbol{q}_i^\top \boldsymbol{q}_0 = 0$ (since columns of $\mathbf{Q}$ are orthogonal).
    \item Mutual orthogonality: $\boldsymbol{c}_i\boldsymbol{c}_j^\top = \boldsymbol{q}_i^\top \boldsymbol{q}_j = \delta_{ij}$ (implying $\mathbf{C}\mathbf{C}^\top = \mathbf{I}$).
\end{itemize}

Finally, the soft partition matrix $\tilde{\mathbf{S}} \in [0, 1]^{N \times k}$ is computed via softmax similarity between node embeddings and generated cluster centers:
\begin{equation}
    \tilde{\mathbf{S}}_{ij} = \frac{\exp(\boldsymbol{h}_i\boldsymbol{c}_j^\top / \tau)}{\sum_{m=1}^k \exp(\boldsymbol{h}_i\boldsymbol{c}_m^\top / \tau)},
\end{equation}
where $\tau$ is a temperature hyperparameter that controls the sharpness of the assignment distribution.
This process ensures that the partitioning is driven purely by the structural deviations orthogonal to the common global mean.

\subsection{Greedy Capacity Discretization (GCD) Algorithm}
\label{app:gcd}

To strictly enforce the hardware constraint where each subgraph cannot exceed a maximum number of qubits ($\text{max\_nodes}$), we propose the greedy capacity discretization (GCD) algorithm.
Unlike global sorting strategies that are computationally expensive and hard to parallelize, GCD employs a row-wise decoupled sorting followed by a round-based greedy negotiation strategy.

\begin{algorithm}[h]
   \caption{Greedy Capacity Discretization (GCD)}
   \label{alg:gcd}
\begin{algorithmic}[1]
   \STATE {\bfseries Input:} Soft partition matrix $\tilde{\mathbf{S}} \in [0, 1]^{N \times k}$, Qubit capacity limit $C_{\text{max}}=\text{max\_nodes}$
   \STATE {\bfseries Output:} Discrete partition matrix $\mathbf{S} \in \{0,1\}^{N \times k}$
   \STATE Initialize $\mathbf{S} \leftarrow \{0\}^{N \times k}$, unassigned nodes set $\mathcal{V}_{\text{unassigned}} \leftarrow \{1, \dots, N\}$, partition loads $\mathbf{L} \leftarrow [0, \dots, 0] \in \mathbb{R}^k$
   \STATE Parallel Step: For each node $i$, sort row $\tilde{\mathbf{S}}_{i,:}$ to get ranks $\mathcal{R}_i = [r_{i,1}, \dots, r_{i,k}]$

   \FOR{priority round $l = 1$ {\bfseries to} $k$}
       \STATE Clear candidate lists $\mathcal{U}_j \leftarrow \emptyset$ for all $j \in \{1, \dots, k\}$
       \FOR{each node $i \in \mathcal{V}_{\text{unassigned}}$}
           \STATE Target partition $j \leftarrow r_{i,l}$
           \STATE Add $i$ to $\mathcal{U}_j$
       \ENDFOR

       \FOR{each partition $j \in \{1, \dots, k\}$}
           \STATE Capacity remaining $C_{\text{remain}}^{(j)} \leftarrow C_{\text{max}} - \mathbf{L}_j$
           \IF{$|\mathcal{U}_j| \le C_{\text{remain}}^{(j)}$}
               \STATE Accept set $\mathcal{A} \leftarrow \mathcal{U}_j$
           \ELSE
               \STATE Sort nodes in $\mathcal{U}_j$ by score $\tilde{\mathbf{S}}_{i,j}$ descending
               \STATE Accept set $\mathcal{A} \leftarrow \text{Top-}C_{\text{remain}}^{(j)}(\mathcal{U}_j)$
           \ENDIF

           \FOR{each node $i \in \mathcal{A}$}
               \STATE $\mathbf{S}_{i,j} \leftarrow 1$
               \STATE $\mathbf{L}_j \leftarrow \mathbf{L}_j + 1$
               \STATE Remove $i$ from $\mathcal{V}_{\text{unassigned}}$
           \ENDFOR
       \ENDFOR
       \STATE \textbf{Break} if $\mathcal{V}_{\text{unassigned}}$ is empty
   \ENDFOR
\end{algorithmic}
\end{algorithm}

Let $\tilde{\mathbf{S}} \in [0,1]^{N \times k}$ be the soft partition probabilities from the OCH. The process involves two phases:

\paragraph{Phase 1: Row-wise Decoupled Sorting.}
Instead of sorting all $N \times k$ elements globally, we decouple the problem into $N$ independent sorting tasks. For each node $i$, we sort its partition probabilities in descending order to obtain a preference list: \begin{equation}    \mathcal{R}_i = [r_{i,1}, r_{i,2}, \dots, r_{i,k}], \quad \text{s.t. } \tilde{\mathbf{S}}_{i, r_{i,1}} \ge \tilde{\mathbf{S}}_{i, r_{i,2}} \dots \ge \tilde{\mathbf{S}}_{i, r_{i,k}}.
\end{equation}
This step identifies the 1st-choice, 2nd-choice, $\dots$, $k$-th choice partition for every node.

\paragraph{Phase 2: Round-based Greedy Negotiation.}
We simulate a multi-round admission process iterating from priority level $l=1$ to $k$:
\begin{enumerate}
\item Candidate selection: In round $l$, any unassigned node $i$ applies to its $l$-th preferred partition $j = r_{i,l}$. We collect all applicants for partition $j$ into a candidate set $\mathcal{U}_j$.
\item Conflict resolution: We check the remaining capacity of partition $j$, denoted as $C_{\text{remain}}^{(j)}$.    \begin{itemize}
\item If $|\mathcal{U}_j| \le C_{\text{remain}}^{(j)}$, all candidates are accepted.
\item If $|\mathcal{U}_j| > C_{\text{remain}}^{(j)}$, we sort candidates by their confidence scores (probability values $\tilde{\mathbf{S}}_{i,j}$) and accept the top-$C_{\text{remain}}^{(j)}$ nodes.
\end{itemize}
\item State update: Accepted nodes are locked into partition $j$. Rejected nodes remain unassigned and automatically proceed to round $l+1$ to attempt their next preferred partition.
\end{enumerate}

\paragraph{Complexity Analysis and Parallel Efficiency.}
We acknowledge that the theoretical worst-case complexity of GCD is dominated by the conflict resolution in Phase 2.
In a pathological scenario where node preferences are highly centralized (e.g., all $N$ nodes contend for the same partition in every round), the sorting in Line 16 requires $\mathcal{O}(N \log N)$ operations.
Iterated over $k$ rounds, the strict worst-case bound is $\mathcal{O}(k N \log N)$, which is asymptotically equivalent to global sorting strategies.

However, the practical computational advantage of GCD stems from its decoupling strategy and hardware efficiency, rather than a theoretical reduction in the worst-case bound:
\begin{itemize}
    \item Parallelism in Phase 1:
    The sorting in Phase 1 has a complexity of $\mathcal{O}(N k \log k)$ and is strictly row-independent.
    This allows us to map the task to $N$ parallel GPU threads via CUDA, thereby substantially reducing wall-clock time compared to global sorting which is harder to parallelize efficiently.
    \item Average-case performance:
    In practice, the soft partition probabilities $\tilde{\mathbf{S}}$ generated by the OCH are trained to be discriminative, meaning candidates are typically distributed across different partitions.
    This effectively avoids the worst-case collision scenario in Phase 2, maintaining high inference efficiency.
\end{itemize}

The complete procedure is summarized in Algorithm~\ref{alg:gcd}.

\paragraph{Differentiability via Straight-Through Estimator (STE).}
Since the GCD operation involves non-differentiable sorting and thresholding, standard backpropagation cannot be directly applied.
To enable end-to-end training, we employ the straight-through estimator (STE,~\citealp{bengio2013estimating}).
During the forward pass, we use the discrete partition matrix $\mathbf{S} = \text{GCD}(\tilde{\mathbf{S}})$.
During the backward pass, we bypass the discretization operations and approximate the gradients as identity:
\begin{equation}
    \nabla_{\tilde{\mathbf{S}}} f \approx \nabla_{\mathbf{S}} f.
\end{equation}
While the STE has historically been regarded as a heuristic, recent theoretical work provides a principled interpretation of its effectiveness for training models with discrete components.
In particular, \citet{liu2023bridging} show that the original STE can be viewed as a special case of the forward Euler discretization of an ordinary differential equation associated with the smoothing of discrete functions.
Under this perspective, the STE corresponds to a first-order approximation of the underlying gradient flow, offering a theoretical explanation for why it can yield meaningful descent directions despite discontinuities in the forward mapping.

Beyond this general interpretation, the effectiveness of the STE in our GCD module further relies on a problem-specific structural property.
In our construction, the discretization step exhibits a monotonic relationship between the continuous scores $\tilde{\mathbf{S}}$ and the resulting discrete assignments $\mathbf{S}$:
increasing $\tilde{\mathbf{S}}_{ij}$ improves the rank of node $i$ within partition $j$, thereby increasing the likelihood of $\mathbf{S}_{ij}$ being set to 1.
This positive correlation implies that, although the gradient estimator is biased, the sign of the STE gradient remains aligned with improvements in the discrete partitioning objective, effectively guiding the continuous parameters $\theta$ toward favorable discrete configurations.

\paragraph{Comparison with Differentiable Relaxations.}
We deliberately employ GCD with STE over alternative differentiable relaxation methods such as the Gumbel-Sinkhorn networks~\cite{mena2018learning} or differentiable projection layers (e.g., OptNet~\cite{amos2017optnet}, CvxpyLayers~\cite{agrawal2019differentiable}) due to the strict necessity of hard constraints imposed by quantum hardware resources.

Relaxation methods operate over continuous assignment spaces, producing soft assignment matrices (e.g., doubly stochastic matrices in Sinkhorn) where entries are continuous probabilities $p_{ij} \in [0,1]$.
While differentiable, they suffer from three critical limitations:
\begin{itemize}
    \item Violation of hard constraints:
    Soft assignments require post-hoc rounding (e.g., $\text{argmax}$) to obtain discrete decisions.
    This rounding process often destroys constraint guarantees.
    A soft assignment satisfying $\sum_i p_{ij} \le C_{\text{max}}$ does not guarantee that the rounded discrete subgraph size $\sum_i \mathbb{I}(j = \operatorname*{argmax}_k p_{ik})$ respects the hardware qubit limit $C_{\text{max}}$.

    \item Incompatibility with inequalities:
    Methods like Sinkhorn are naturally formulated for equality constraints (fixed marginals), making them ill-suited for the inequality constraints ($\le C_{\text{max}}$) intrinsic to partitioning.

    \item Computational overhead:
    Rigorous rounding strategies for constrained problems (e.g., via the Hungarian algorithm) typically incur cubic complexity $\mathcal{O}(N^3)$~\cite{kuhn1955hungarian}, limiting scalability to large graphs.
\end{itemize}

In contrast, our GCD algorithm incorporates the capacity constraint $C_{\text{max}}$ (e.g., max\_nodes=10) directly into the forward discretization pass with efficient parallel logic ($\mathcal{O}(N k \log k)$).
By enforcing these constraints strictly during the forward pass, we ensure that the quantum evaluator always receives feasible subgraphs that can be physically mapped to the quantum device.
This eliminates the validity gap often observed in relaxation-based methods where the optimized soft assignment projects to an infeasible discrete assignment.

\subsection{Network Architecture Specifications}
\label{app:network_arch}
In this section, we detail the specific architectures of the neural components utilized in Neural~QAOA\textsuperscript{2}. The GEN framework employs two primary backbone encoders: the GAT encoder (based on GATv2~\cite{brody2022how}) and the GCN encoder (based on standard GCN~\cite{kipf2017semisupervised}), which are reused across different modules.

\paragraph{Input Feature Initialization.}
To capture the intrinsic structural properties of the input graph while maintaining permutation equivariance, we explicitly construct the initial node features $\mathbf{X} \in \mathbb{R}^{N \times d}$.
Specifically, for each node $i$, the input feature vector $\boldsymbol{x}_i$ consists of five standardized structural descriptors (i.e., $d=5$):
\begin{equation}
\boldsymbol{x}_i = [d_i, \, s_i, \, C_i, \, \mathrm{PR}_i, \, \mathrm{BC}_i],
\end{equation}
where $d_i$ is node degree, $s_i$ is weighted degree, $C_i$ is clustering coefficient, $\mathrm{PR}_i$ is PageRank score, and $\mathrm{BC}_i$ is betweenness centrality.
These features collectively encode both local connectivity and global topological influence, and are normalized to zero mean and unit variance to ensure numerical stability for the subsequent GNN encoders.
The mathematical definitions are provided below, where $\mathbf{A}_{ij}$ directly denotes the weighted adjacency matrix (i.e., $\mathbf{A}_{ij}$ carries the edge weight value, and $\mathbf{A}_{ij}=0$ implies no edge):

\begin{itemize}
    \item Node degree \& weighted degree:
    The node degree (unweighted) is defined as the number of neighbors, $d_i = \sum_{j} \mathbb{I}(\mathbf{A}_{ij} \neq 0)$, where $\mathbb{I}(\cdot)$ is the indicator function.
    The weighted degree (strength) is the sum of incident edge weights, defined as $s_i = \sum_{j} \mathbf{A}_{ij}$.

    \item Clustering coefficient:
    We utilize the weighted variant adapted for signed graphs by considering the absolute weights $|\mathbf{A}_{ij}|$:
    \begin{equation}
        C_i = \frac{1}{d_i(d_i-1)} \sum_{j,k} (\hat{\mathbf{A}}_{ij} \hat{\mathbf{A}}_{jk} \hat{\mathbf{A}}_{ki})^{1/3},
    \end{equation}
    where $\hat{\mathbf{A}}_{ij} = |\mathbf{A}_{ij}| / \max(|\mathbf{A}_{ij}|)$ represents the normalized absolute edge weights~\cite{onnela2005intensity, saramaki2007generalizations}.

    \item PageRank score:
    To quantify the global influence, we compute the PageRank score $\mathrm{PR}_i$ using absolute weights to ensure convergence on signed graphs~\cite{page1999pagerank}:
    \begin{equation}
        \mathrm{PR}_i = \alpha \sum_{j \in \mathcal{N}(i)} \frac{|\mathbf{A}_{ji}|}{D_j^{\text{out}}} \mathrm{PR}_j + (1-\alpha)\frac{1}{N},
    \end{equation}
    where $\alpha$ is the damping factor (set to 0.85) and $D_j^{\text{out}} = \sum_k |\mathbf{A}_{jk}|$ is the out-degree sum of absolute weights.

    \item Betweenness centrality:
    This metric quantifies the influence of a node as a bridge along shortest paths~\cite{freeman1977set}. It is defined as:
    \begin{equation}
        \mathrm{BC}_i = \sum_{s \neq i \neq t} \frac{\sigma_{st}(i)}{\sigma_{st}},
    \end{equation}
    where $\sigma_{st}$ is the total number of shortest paths from node $s$ to $t$ (computed using $|\mathbf{A}_{ij}|$ as distance costs), and $\sigma_{st}(i)$ is the number of those paths passing through node $i$.

    Note on feature interpretation: We explicitly interpret the edge weight $|\mathbf{A}_{ij}|$ as a resistance or transport cost.
    Under this formulation, shortest paths favor edges with lower weights (weaker interactions).
    Consequently, this metric identifies nodes that serve as critical bridges across ``weakly-coupled'' regions, providing the GNN with structural bottleneck information that complements the direct connectivity signals encoded in the adjacency matrix.
\end{itemize}

\begin{table}[htbp]
\centering
\caption{Architecture specifications for the Quantum Evaluator.}
\label{tab:evaluator_arch}
\resizebox{0.8\textwidth}{!}{
\begin{tabular}{l|l|l}
\toprule
Component & Backbone & Input Specification \\
\midrule
Topology Encoder & GAT encoder~(Hidden Dim: $64$, Layers: $3$) & Node features $\mathbf{X} \in \mathbb{R}^{N \times 5}$, Adjacency $\mathbf{A}$ \\
Partition Encoder & GCN encoder~(Hidden Dim: $64$, Layers: $3$) & Node features $\mathbf{X}$, Masked Adjacency $\mathbf{A}_{\text{sub}}$ \\
Parameter Encoder & GAT encoder~(Hidden Dim: $64$, Layers: $3$) & Sin-Cos Embeddings $\tilde{\mathbf{X}}_{\text{param}} \in [-1,1]^{N \times 4p},\ \text{where}\ p=1$ \\
\midrule
GMP & \multicolumn{2}{l}{Global Mean Pooling $\to$ Concatenation (Dim: $192$)} \\
\midrule
MLP & \multicolumn{2}{l}{MLP: $\text{Linear}(192,256)\to\text{ReLU}\to\text{Linear}(256,1)\to\text{Sigmoid}$} \\
\bottomrule
\end{tabular}
}
\end{table}
\begin{table}[htbp]
\centering
\caption{Architecture specifications for the Joint Generator.}
\label{tab:generator_arch}
\resizebox{0.8\textwidth}{!}{
\begin{tabular}{l|l|l}
\toprule
Component & Backbone & Input Specification \\
\midrule
\multicolumn{3}{l}{\textit{Partition Generator}} \\
\quad Topology Encoder & GAT encoder~(Hidden Dim: $128$, Layers: $2$) & Node features $\mathbf{X} \in \mathbb{R}^{N \times 5}$, Adjacency $\mathbf{A}$ \\
\quad OCH & {OCH~($k_{\text{max}}:127$, Hidden Dim: $128$, $\tau$: $0.05$)} & Node Embeddings $\mathbf{H}_{\text{topology}} \in \mathbb{R}^{N \times 128}$\\
\midrule
\multicolumn{3}{l}{\textit{Parameter Generator}} \\
\quad Partition Encoder & GCN encoder~(Hidden Dim: $128$, Layers: $2$) & Node features $\mathbf{X}$, Masked Adjacency $\mathbf{A}_{\text{sub}}$ \\
\quad GMP & \multicolumn{2}{l}{Global Mean Pooling $\to \mathbf{H}_{\text{sub}} \in\mathbb{R}^{128 \times k}$ } \\
\quad MLP & \multicolumn{2}{l}{$\text{Linear}(128, 128) \to$ReLU$\to \text{Linear}(128, 4p)\to \arctan2 \to \mathbf{P} \in [0,2\pi)^{2p \times k}, \ \text{where}\ p=1$} \\
\bottomrule
\end{tabular}
}
\end{table}

\paragraph{Shared Backbone Architectures.}
The structural specifications for the shared encoders are as follows:
\begin{itemize}
\item GAT encoder: This module first projects initial node features into a latent space via a feature embedding block:
$\text{Linear}(d, h) \to \text{ReLU} \to \text{Linear}(h, h) \to \text{ReLU}$.
This is followed by GATv2Conv layers.
Each GAT layer (except the last) is followed by ReLU activation.
The final layer omits the activation function to preserve feature information.
\item GCN encoder: This module consists of a stack of GCNConv layers.
Similar to the GAT encoder, intermediate layers employ ReLU activation, while the final layer is linear.
\end{itemize}

\paragraph{Quantum Evaluator Architecture ($f_\phi$).}
The quantum evaluator employs a multi-view GNN architecture to fuse heterogeneous inputs.
It comprises three parallel encoders and a prediction head, as detailed in Table \ref{tab:evaluator_arch}.

\paragraph{Joint Generator Architecture ($g_\theta$).}
The joint generator follows a sequential ``partition-first, parameter-second'' design.
Specifications are summarized in Table~\ref{tab:generator_arch}.

\section{Experimental Settings}
\label{app:settings}

\subsection{Hyperparameters}

\textbf{Implementation and Environment.}
We implemented Neural~QAOA\textsuperscript{2} using PyTorch and PyTorch Geometric.
All experiments were conducted on a Linux workstation equipped with an Intel Xeon Platinum 8380 CPU (256~GB RAM) and four NVIDIA A30 GPUs (24~GB VRAM each).
To ensure reproducibility, we fixed the random seed to 42 for Python, NumPy, and PyTorch backends.

\textbf{Optimization Strategy.}
For all trainable components (quantum evaluator, joint generator), we employed the AdamW optimizer~\cite{loshchilov2018decoupled}, which decouples weight decay from gradient updates to provide better generalization.
Learning rates were dynamically adjusted using the ReduceLROnPlateau scheduler, which decays the learning rate when the validation loss stops improving.

\textbf{Training and Inference Phases.}
Our optimization protocol consists of two offline training stages followed by online inference:
\begin{itemize}
    \item Quantum evaluator training.
    The quantum evaluator $f_\phi$ was trained for 100 epochs with a batch size of 32.
    We used an initial learning rate of $1\times 10^{-3}$ and a weight decay of $5\times 10^{-4}$.
    The scheduler was configured with a factor of 0.5 and patience of 3 epochs.

    \item Joint generator pre-training.
    The joint generator $g_\theta$ was trained for 1500 epochs with a batch size of 16.
    We used an initial learning rate of $4\times 10^{-3}$ and a weight decay of $5\times 10^{-4}$.
    The scheduler used a factor of 0.8 with a patience of 100 steps.

    \item Test-time adaptation.
    During inference, for each test instance, we fine-tuned the generator parameters for 1000 steps (gradient updates) to maximize the predicted performance ratio.
    We utilized the same AdamW optimizer with a learning rate of $1\times 10^{-3}$ and a scheduler (factor=0.8, patience=100, min\_lr=$1\times 10^{-4}$).
\end{itemize}

A comprehensive summary of all hyperparameters is provided in Table~\ref{tab:hyperparameters}.

\begin{table}[htbp]
  \centering
  \caption{\textbf{Hyperparameter Settings.} Summary of configurations for training and inference.}
  \label{tab:hyperparameters}
  \resizebox{0.48\textwidth}{!}{
  \begin{tabular}{l|c}
    \toprule
    Hyperparameter & Value \\
    \midrule
    \multicolumn{2}{l}{\textit{Global Settings}} \\
    \quad Qubit Constraint ($\text{max\_nodes}$) & 10 \\
    \quad QAOA Circuit Depth ($p$) & 1 \\
    \quad Random Seed & 42 \\
    \midrule
    \multicolumn{2}{l}{\textit{Quantum Evaluator ($f_\phi$)}} \\
    \quad Batch Size & 32 \\
    \quad Optimizer & AdamW~\cite{loshchilov2018decoupled} \\
    \quad Learning Rate & $1 \times 10^{-3}$ \\
    \quad Weight Decay & $5 \times 10^{-4}$ \\
    \quad Training Epochs & 100 \\
    \quad LR Scheduler & ReduceLROnPlateau (Factor 0.5) \\
    \midrule
    \multicolumn{2}{l}{\textit{Joint Generator ($g_\theta$)}} \\
    \quad Batch Size & 16 \\
    \quad Optimizer & AdamW~\cite{loshchilov2018decoupled} \\
    \quad Learning Rate & $4 \times 10^{-3}$ \\
    \quad Weight Decay & $5 \times 10^{-4}$ \\
    \quad Training Epochs & 1500 \\
    \quad LR Scheduler & ReduceLROnPlateau (Factor 0.8) \\
    \midrule
    \multicolumn{2}{l}{\textit{Test-Time Adaptation (Inference)}} \\
    \quad Fine-tuning Steps & 1000 \\
    \quad Optimizer & AdamW~\cite{loshchilov2018decoupled} \\
    \quad Initial Learning Rate & $1 \times 10^{-3}$ \\
    \quad Minimal Learning Rate & $1 \times 10^{-4}$ \\
    \bottomrule
  \end{tabular}
  }
\end{table}

\subsection{Dataset Details}
\label{app:datasets}

\textbf{Source Benchmarks.}
We evaluate Neural~QAOA\textsuperscript{2} on a public benchmark library.
This library provides a diverse collection of problem instances with varying densities and topologies.
Specifically, we utilize the following subsets:
\begin{itemize}
    \item Training and validation:
    constructed from datasets \textit{B}, \textit{BE}, and \textit{W} ($N \in [51, 501]$), representing standard QUBO and MaxCut problems.
    To ensure a reproducible evaluation, we adopt a deterministic splitting strategy based on instance indices:
    instances with IDs ending in $1$ or $2$ are reserved for testing (constituting 20\%, or 50 graphs), while the remaining instances serve as the training and validation set (80\%, or 200 graphs).
    \item Out-of-distribution testing:
    includes \textit{GKA} (distinct QUBO distributions) and \textit{L} (Ising spin glass with 1D/2D/3D lattice structures) to evaluate zero-shot generalization.
    \item Large-scale testing:
    includes \textit{HR} and \textit{BMZ} datasets ($N$ up to 1000) to assess performance on scales significantly larger than those seen during training.
\end{itemize}

\textbf{Data Preprocessing and Feature Engineering.}
Raw graphs are converted into PyTorch Geometric \texttt{Data} objects.
To ensure numerical stability and generalization across graph sizes, we perform the following preprocessing steps:
\begin{enumerate}
    \item Node feature extraction:
    As detailed in Appendix~\ref{app:network_arch}, we extract 5 structural features for each node.
    These features are standardized (zero mean, unit variance) within each graph instance to handle variations in graph scale.
    \item Edge weight normalization:
    Edge weights are normalized by the maximum absolute weight in the graph, ensuring $w_{ij} \in [-1, 1]$.
\end{enumerate}

\textbf{Construction of the Offline Dataset ($\mathcal{D}_{\text{offline}}$).}
To pre-train the quantum evaluator $f_\phi$, we constructed a labeled dataset consisting of 52587 triplets $(G_i, \mathbf{S}_i, \mathbf{P}_i)$ and their corresponding performance labels $\rho_i$.
Crucially, this process relies solely on accessible samples and does not require computationally expensive optimal partitions $\mathbf{S}^*$ or parameters $\mathbf{P}^*$.
The construction protocol is as follows:
\begin{enumerate}
    \item Partition sampling ($\mathbf{S}_i$):
    To prevent the evaluator from overfitting to a specific partitioning logic, we employed a diverse sampling strategy.
    For each graph $G_i$, we generated partitions using a mixture of random, modularity, boundary, and KL heuristics.
    Each heuristic was executed independently 70 times per graph, with duplicate partitions removed to ensure diversity.
    \item Parameter sampling ($\mathbf{P}_i$):
    QAOA parameters $(\boldsymbol{\gamma}, \boldsymbol{\beta})$ were sampled uniformly from $[0, 2\pi)$ to ensure the evaluator learns the landscape across the full parameter space.
    \item Ground-truth label generation ($\rho_i$):
    The ground-truth performance labels were obtained by executing the QAOA\textsuperscript{2} simulation using the configuration $(G_i, \mathbf{S}_i, \mathbf{P}_i)$.
    The entire offline data collection process was completed within approximately two days.
\end{enumerate}

\subsection{Baseline Implementation}
\label{app:baselines}

To benchmark the effectiveness of Neural~QAOA\textsuperscript{2}, we compare it against four representative partitioning strategies. All baselines are implemented using the standard \texttt{NetworkX} library in Python, adapted to satisfy the hard constraint of maximum subgraph size ($\text{max\_nodes}$).

\paragraph{Random Partition.}
Given a graph $G$ with $N$ nodes and a capacity constraint $\text{max\_nodes}$, the algorithm randomly samples $\text{max\_nodes}$ vertices without replacement to form a subgraph, repeating this process until all vertices are assigned to $k = \lceil N / \text{max\_nodes} \rceil$ partitions.

As noted in previous studies~\cite{zhou2023qaoa}, random partitioning performs adequately on dense graphs where edges are uniformly distributed.
However, on sparse graphs, it risks generating subgraphs with sparse internal connectivity and heavy inter-partition cuts, leading to suboptimal QAOA\textsuperscript{2} performance.

\paragraph{Greedy Modularity Maximization.}
Proposed by Clauset et al.~\cite{clauset2004finding}, this method seeks to maximize the modularity ($Q$), a metric quantifying the strength of division of a network into modules.
The algorithm follows a greedy agglomerative strategy, starting with each node in its own community and iteratively merging pairs that result in the maximal increase in $Q$.
The modularity $Q$ is defined as:
\begin{equation}
    Q = \frac{1}{2m} \sum_{i,j} \left[ \mathbf{A}_{ij} - \frac{d_i d_j}{2m} \right] \delta(c_i, c_j),
\end{equation}
where $\mathbf{A}_{ij}$ is the element of the adjacency matrix, $d_i$ and $d_j$ denote the degrees of nodes $i$ and $j$, $m = \frac{1}{2}\sum_{i,j} \mathbf{A}_{ij}$ is the total edge weight, and $\delta(c_i, c_j)$ is the Kronecker delta function which equals 1 if nodes $i, j$ belong to the same community $c$ and 0 otherwise.

By maximizing intra-community edges relative to a null model, this heuristic aligns well with the divide-and-conquer objective of minimizing cuts between subgraphs, making it particularly effective for graphs with inherent community structures.

\paragraph{Boundary Vertices Minimization.}
Following the methodology of \citet{guerreschi2021solving}, this baseline explicitly targets the reduction of boundary nodes, which are defined as nodes having at least one neighbor in a different partition.
The implementation proceeds in two stages:
\begin{enumerate}
    \item Initialization: A preliminary partition is generated using standard community detection (e.g., Louvain algorithm).
    \item Refinement: A dedicated optimization routine iteratively moves vertices between partitions to minimize the count of boundary nodes while respecting the size constraint.
\end{enumerate}

\paragraph{Kernighan-Lin (KL) Algorithm.}
The KL algorithm~\cite{kernighan1970efficient} is a classic heuristic for graph bisection that minimizes the edge cut weight.
Starting from an initial random bisection, the algorithm iteratively swaps pairs of nodes between subsets to reduce the cut size until no further improvement is possible.

To satisfy our multi-partition requirement ($k>2$) and size constraints, we apply the algorithm recursively.
This ensures that the resulting subgraphs are disjoint and balanced, strictly adhering to the $\text{max\_nodes}$ limit.

\subsection{QAOA\textsuperscript{2} Recursion and Quantum Simulation Protocol}
\label{app:qaoa2_recursion}

To explicitly clarify the implementation details of the QAOA\textsuperscript{2} recursion, we provide a formal breakdown of the recursion depth and the number of subproblem calls.
The recursion structure is deterministically governed by the problem size $N$ and the hardware qubit constraint $\text{max\_nodes}$ (set to 10 in our main experiments).

\textbf{Recursion Logic.}
Let $G^{(l)}$ denote the graph at recursion level $l$, with size $N^{(l)}$.
As conceptually illustrated in Figure~\ref{fig:main_method}, the workflow consists of the following steps:
\begin{enumerate}
    \item Partitioning: If $N^{(l)} > {\text{max\_nodes}}$, the graph is partitioned into $k^{(l)}$ subgraphs, satisfying $k^{(l)} \ge \lceil N^{(l)} / {\text{max\_nodes}} \rceil$.
    \item Subproblem solving: Each of the $k^{(l)}$ subgraphs is solved independently via QAOA.
    This constitutes $k^{(l)}$ quantum circuit simulation calls at level $l$.
    \item Reformulating (recursion step): The solutions to these subgraphs determine the edge weights of a new reformulated graph $G^{(l+1)}$ which has $N^{(l+1)} = k^{(l)}$ nodes.
    \item Termination: This process repeats until the merged graph fits within the device, i.e., $N^{(L)} \le {\text{max\_nodes}}$. The final graph is solved directly.
\end{enumerate}

\textbf{Quantitative Analysis of Calls.}
For a typical large-scale instance in our dataset (e.g., $N=500$) with ${\text{max\_nodes}}=10$:
\begin{enumerate}
    \item Level 0 (original graph): $N^{(0)}=500$. Partitioned into $k^{(0)} = 50$ subgraphs. (50 calls).
    \item Level 1 (first reformulated graph): The reformulated graph has size $N^{(1)} = 50$. Since $50 > 10$, it is partitioned into $k^{(1)} = 5$ subgraphs. (5 calls).
    \item Level 2 (final reformulated graph): The new reformulated graph has size $N^{(2)} = 5$. Since $5 \le 10$, recursion terminates, and this graph is solved directly. (1 call).
\end{enumerate}
Total depth is 2 (excluding the base layer), and the total number of QAOA subproblem calls is $50 + 5 + 1 = 56$.
This hierarchical structure results in a recursion depth of $\mathcal{O}(\log_{\text{max\_nodes}} N)$.
Consequently, the total number of subproblem calls scales linearly as $\mathcal{O}(N)$, ensuring that the total quantum simulation cost remains tractable even for large $N$.
All baselines and our Neural~QAOA\textsuperscript{2} follow this identical recursion protocol to ensure fair wall-clock time comparisons.

\textbf{QAOA Implementation Details.}
To ensure strict computational fairness and reproducibility, the optimization routine for QAOA simulations is standardized across all methods.
For each subproblem, we execute a standard QAOA routine implemented using PennyLane~\cite{bergholm2018pennylane} with the \texttt{lightning.gpu} backend to accelerate simulation.
The detailed configuration is as follows:
\begin{itemize}
    \item Circuit optimization:
     We utilize the standard gradient descent optimizer with a fixed learning rate (step size) of 0.01.
    The optimization budget is fixed at 20 steps for all subproblems in the main experiments.
    Gradients are computed via the adjoint differentiation method~\cite{jones2020efficient}, which provides exact gradients with constant memory overhead, ensuring efficient training on GPU.
    \item Sampling protocol:
    After parameter optimization, we perform sampling with 1000 shots.
\end{itemize}

\section{Additional Experimental Results}
\label{app:more_experiments}

\subsection{Advanced Parameter Initialization Strategies and Deeper Quantum Circuits}
\label{app:advanced-param-init-deeper-circuit}

While the main experiments adopt random initialization (RI),
we strengthen the two most competitive heuristic baselines (modularity and boundary)
by equipping them with four advanced QAOA parameter initialization strategies:
the physics-inspired TQA~\cite{sack2021quantum},
INTERP~\cite{zhou2020quantum},
FOURIER~\cite{zhou2020quantum},
and the learning-based QIBPI~\cite{katial2025instance}.
In addition, we increase the circuit depth to $p=2$ and $p=3$ to assess performance scaling under deeper quantum circuits.

As summarized in Table~\ref{tab:app-full-advanced-init-deeper-circuit},
advanced initialization strategies improve heuristic baselines over random initialization in more than half of the evaluated combinations.
However, they remain insufficient to close the performance gap to Neural~QAOA\textsuperscript{2}.
For example, on the \textit{B} dataset with $p=2$,
Neural~QAOA\textsuperscript{2} achieves an average rank of 2.88,
substantially outperforming the strongest baseline configuration (modularity+QIBPI, rank 5.13).
These results indicate that heuristic partitioning introduces a fundamental structural bottleneck,
which cannot be overcome by parameter optimization alone,
thereby underscoring the necessity of jointly learning partitioning and parameter initialization.

We further evaluate the adaptability of our approach to deeper circuits,
despite both the quantum evaluator and the joint generator being trained exclusively on $p=1$ QAOA\textsuperscript{2} simulation data.
To adapt to deeper circuits ($p=2,3$),
we employ the physics-inspired INTERP strategy~\cite{zhou2020quantum}
to extrapolate the synthesized depth-$1$ parameters to higher depths.
Empirically, this extrapolation incurs no noticeable degradation in performance:
Neural~QAOA\textsuperscript{2} consistently outperforms all ten competing baseline combinations
at both $p=2$ and $p=3$.
This demonstrates that the learned partitions, together with the generated parameter initialization,
generalize effectively to increased circuit depth without requiring retraining on deeper circuits.

\begin{table*}[t]
  \centering
  \caption{\textbf{Comprehensive Evaluation on $p=2, 3$ with Advanced Parameter Initialization Strategies.}
  We compare Neural QAOA\textsuperscript{2} (Ours) against the top two partition heuristics,
  QAOA\textsuperscript{2}~(Modularity,~\citealp{zhou2023qaoa,clauset2004finding}) and
  QAOA\textsuperscript{2}~(Boundary,~\citealp{guerreschi2021solving}), equipped with advanced parameter initialization strategies.
  Since Neural~QAOA\textsuperscript{2} inherently generates partitions and initial parameters, it is presented as a single end-to-end solution.
  Parameter initialization strategies are categorized into
  (a)~topology-blind:
  \textbf{RI}~(random initialization);
  (b)~physics-inspired:
  \textbf{TQA}~(Trotterized quantum annealing,~\citealp{sack2021quantum});
  \textbf{INT}~(INTERP,~\citealp{zhou2020quantum});
  \textbf{FOU}~(FOURIER,~\citealp{zhou2020quantum});
  (c)~learning-based:
  \textbf{QIBPI}~(quantum instance-based parameter initialization,~\citealp{katial2025instance}).
  Cell layout: mean $\bar{\rho}$ (top), $\pm$ std (middle), and rank (bottom, calculated among all 11 columns).
  \textbf{Bold} indicates the best performance.}
  \label{tab:app-full-advanced-init-deeper-circuit}
  \resizebox{1.0\textwidth}{!}{
    \begin{tabular}{l|ccccc|ccccc|c}
    \toprule
    \multirow{3}{*}{Dataset (\#)}
    & \multicolumn{5}{c|}{QAOA\textsuperscript{2}~(Modularity)}
    & \multicolumn{5}{c|}{QAOA\textsuperscript{2}~(Boundary)}
    & Neural~QAOA\textsuperscript{2}~(Ours) \\
    \cmidrule(lr){2-6} \cmidrule(lr){7-11} \cmidrule(l){12-12}
    & RI & TQA & INT & FOU & QIBPI
    & RI & TQA & INT & FOU & QIBPI
    & GEN \\
    \midrule

    \multicolumn{12}{c}{\textit{Circuit Depth:} $p=2$} \\
    \midrule

    \makecell[l]{\textit{B}~(8)}
    & \makecell{0.8333 \\ $\pm$ 0.0323 \\ (5.38)} & \makecell{0.8323 \\ $\pm$ 0.0272 \\ (6.25)} & \makecell{0.8314 \\ $\pm$ 0.0291 \\ (5.38)} & \makecell{0.8314 \\ $\pm$ 0.0291 \\ (5.38)} & \makecell{0.8350 \\ $\pm$ 0.0281 \\ (5.13)}
    & \makecell{0.8255 \\ $\pm$ 0.0272 \\ (8.50)} & \makecell{0.8263 \\ $\pm$ 0.0274 \\ (6.50)} & \makecell{0.8252 \\ $\pm$ 0.0317 \\ (6.13)} & \makecell{0.8252 \\ $\pm$ 0.0317 \\ (6.13)} & \makecell{0.8275 \\ $\pm$ 0.0301 \\ (6.38)}
    & \makecell{\textbf{0.8420} \\ $\pm$ 0.0209 \\ \textbf{(2.88)}} \\

    \makecell[l]{\textit{BE}~(16)}
    & \makecell{0.8707 \\ $\pm$ 0.0327 \\ (6.81)} & \makecell{0.8705 \\ $\pm$ 0.0333 \\ (6.69)} & \makecell{0.8709 \\ $\pm$ 0.0338 \\ (6.63)} & \makecell{0.8709 \\ $\pm$ 0.0338 \\ (6.63)} & \makecell{0.8697 \\ $\pm$ 0.0353 \\ (7.50)}
    & \makecell{0.8718 \\ $\pm$ 0.0360 \\ (5.25)} & \makecell{0.8716 \\ $\pm$ 0.0348 \\ (6.06)} & \makecell{0.8712 \\ $\pm$ 0.0341 \\ (5.44)} & \makecell{0.8712 \\ $\pm$ 0.0341 \\ (5.44)} & \makecell{0.8716 \\ $\pm$ 0.0352 \\ (5.81)}
    & \makecell{\textbf{0.8802} \\ $\pm$ 0.0319 \\ \textbf{(1.75)}} \\

    \makecell[l]{\textit{W}~(26)}
    & \makecell{0.9127 \\ $\pm$ 0.0304 \\ (6.85)} & \makecell{0.9141 \\ $\pm$ 0.0297 \\ (5.23)} & \makecell{0.9145 \\ $\pm$ 0.0295 \\ \textbf{(4.54)}} & \makecell{0.9145 \\ $\pm$ 0.0295 \\ \textbf{(4.54)}} & \makecell{0.9139 \\ $\pm$ 0.0300 \\ (5.42)}
    & \makecell{0.9125 \\ $\pm$ 0.0331 \\ (6.38)} & \makecell{0.9112 \\ $\pm$ 0.0338 \\ (6.92)} & \makecell{0.9115 \\ $\pm$ 0.0348 \\ (6.04)} & \makecell{0.9115 \\ $\pm$ 0.0348 \\ (6.04)} & \makecell{0.9108 \\ $\pm$ 0.0344 \\ (6.65)}
    & \makecell{\textbf{0.9160} \\ $\pm$ 0.0270 \\ (5.08)} \\

    Overall~(50)
    & \makecell{0.8866 \\ $\pm$ 0.0434 \\ (6.60)} & \makecell{0.8870 \\ $\pm$ 0.0433 \\ (5.86)} & \makecell{0.8873 \\ $\pm$ 0.0439 \\ (5.34)} & \makecell{0.8873 \\ $\pm$ 0.0439 \\ (5.34)} & \makecell{0.8871 \\ $\pm$ 0.0436 \\ (6.04)}
    & \makecell{0.8855 \\ $\pm$ 0.0460 \\ (6.36)} & \makecell{0.8849 \\ $\pm$ 0.0455 \\ (6.58)} & \makecell{0.8848 \\ $\pm$ 0.0465 \\ (5.86)} & \makecell{0.8848 \\ $\pm$ 0.0465 \\ (5.86)} & \makecell{0.8849 \\ $\pm$ 0.0457 \\ (6.34)}
    & \makecell{\textbf{0.8927} \\ $\pm$ 0.0390 \\ \textbf{(3.66)}} \\

    \midrule

    \multicolumn{12}{c}{\textit{Circuit Depth:} $p=3$} \\
    \midrule
    \makecell[l]{\textit{B}~(8)}
    & \makecell{0.8359 \\ $\pm$ 0.0287 \\ (5.88)} & \makecell{0.8353 \\ $\pm$ 0.0300 \\ (6.50)} & \makecell{0.8352 \\ $\pm$ 0.0279 \\ (5.38)} & \makecell{0.8338 \\ $\pm$ 0.0287 \\ (7.25)} & \makecell{0.8383 \\ $\pm$ 0.0319 \\ (4.50)}
    & \makecell{0.8227 \\ $\pm$ 0.0307 \\ (7.88)} & \makecell{0.8289 \\ $\pm$ 0.0373 \\ (6.75)} & \makecell{0.8264 \\ $\pm$ 0.0278 \\ (5.88)} & \makecell{0.8276 \\ $\pm$ 0.0334 \\ (5.38)} & \makecell{0.8243 \\ $\pm$ 0.0322 \\ (8.13)}
    & \makecell{\textbf{0.8436} \\ $\pm$ 0.0247 \\ \textbf{(2.50)}} \\

    \makecell[l]{\textit{BE}~(16)}
    & \makecell{0.8700 \\ $\pm$ 0.0342 \\ (6.50)} & \makecell{0.8686 \\ $\pm$ 0.0354 \\ (8.38)} & \makecell{0.8708 \\ $\pm$ 0.0329 \\ (6.63)} & \makecell{0.8708 \\ $\pm$ 0.0340 \\ (6.44)} & \makecell{0.8693 \\ $\pm$ 0.0342 \\ (8.13)}
    & \makecell{0.8714 \\ $\pm$ 0.0365 \\ (5.75)} & \makecell{0.8718 \\ $\pm$ 0.0374 \\ (5.56)} & \makecell{0.8713 \\ $\pm$ 0.0358 \\ (5.69)} & \makecell{0.8708 \\ $\pm$ 0.0347 \\ (6.13)} & \makecell{0.8721 \\ $\pm$ 0.0360 \\ (5.06)}
    & \makecell{\textbf{0.8823} \\ $\pm$ 0.0315 \\ \textbf{(1.75)}} \\

    \makecell[l]{\textit{W}~(26)}
    & \makecell{0.9132 \\ $\pm$ 0.0306 \\ (5.46)} & \makecell{0.9130 \\ $\pm$ 0.0322 \\ (5.69)} & \makecell{0.9130 \\ $\pm$ 0.0305 \\ (5.96)} & \makecell{0.9140 \\ $\pm$ 0.0299 \\ \textbf{(5.04)}} & \makecell{0.9132 \\ $\pm$ 0.0311 \\ (5.46)}
    & \makecell{0.9106 \\ $\pm$ 0.0339 \\ (7.31)} & \makecell{0.9107 \\ $\pm$ 0.0341 \\ (7.08)} & \makecell{0.9116 \\ $\pm$ 0.0347 \\ (6.23)} & \makecell{0.9122 \\ $\pm$ 0.0342 \\ (6.31)} & \makecell{0.9127 \\ $\pm$ 0.0332 \\ (5.92)}
    & \makecell{\textbf{0.9158} \\ $\pm$ 0.0272 \\ (5.15)} \\

    Overall~(50)
    & \makecell{0.8870 \\ $\pm$ 0.0431 \\ (5.86)} & \makecell{0.8864 \\ $\pm$ 0.0444 \\ (6.68)} & \makecell{0.8870 \\ $\pm$ 0.0426 \\ (6.08)} & \makecell{0.8873 \\ $\pm$ 0.0434 \\ (5.84)} & \makecell{0.8872 \\ $\pm$ 0.0433 \\ (6.16)}
    & \makecell{0.8840 \\ $\pm$ 0.0468 \\ (6.90)} & \makecell{0.8852 \\ $\pm$ 0.0466 \\ (6.54)} & \makecell{0.8850 \\ $\pm$ 0.0462 \\ (6.00)} & \makecell{0.8854 \\ $\pm$ 0.0463 \\ (6.10)} & \makecell{0.8856 \\ $\pm$ 0.0468 \\ (6.00)}
    & \makecell{\textbf{0.8936} \\ $\pm$ 0.0387 \\ \textbf{(3.64)}} \\

    \bottomrule
    \end{tabular}
  }
\end{table*}

\subsection{Generalization to Hardware Constraints}
\label{app:hardware_gen}

In real-world quantum computing scenarios, the number of available physical qubits varies across different devices and platforms.
A practical neural solver must adapt to these hardware constraints without requiring resource-intensive retraining.
To evaluate this capability, we test Neural QAOA\textsuperscript{2} on the test instances from Section~\ref{sec:results_main} while varying the subproblem size limit $\text{max\_nodes} \in [5, 15]$ (representing available physical qubits).

Figure~\ref{fig:qubits_generalization} reports the results.
Generally, larger subproblems allow for more extensive local optimization, naturally improving the performance ratio across all methods.
Neural QAOA\textsuperscript{2} consistently achieves the best average rank across all settings.
Even at smaller qubit scales ($\text{max\_nodes} < 10$) that are tighter than the training configuration ($\text{max\_nodes} = 10$), our method establishes a substantial performance gap against baselines.
This confirms that GEN enables zero-shot generalization to quantum devices with diverse qubit capacities.

\begin{figure}[h]
    \centering
    \includegraphics[width=0.6\columnwidth]{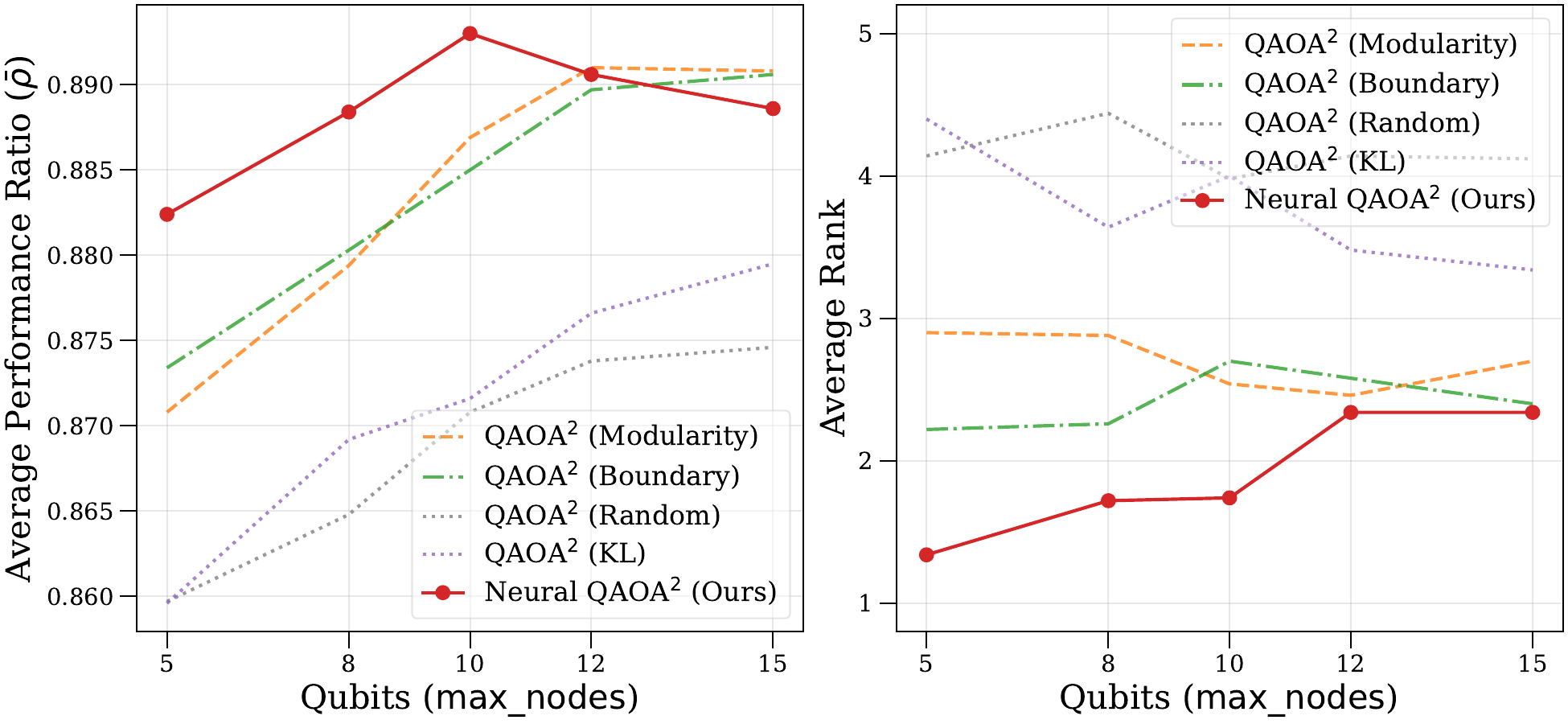}
    \caption{\textbf{Hardware Resources Generalization.}
    Performance comparison under varying subproblem size limits ($\text{max\_nodes} \in [5, 15]$), representing available physical qubits.
    Neural~QAOA\textsuperscript{2} demonstrates strong resource generalization, consistently achieving the best average rank (right) and competitive mean $\bar{\rho}$ (left) without requiring retraining.
    }
    \label{fig:qubits_generalization}
\end{figure}

\subsection{Generalization to Unseen Graph Topologies}
To investigate whether a model trained on Erd\H{o}s--R\'enyi (ER) graphs can generalize to other topologies, such as Power-Law or Regular graphs, we additionally evaluate an ER-trained model on unseen Power-Law and Regular graphs at $N=100$. The experimental settings remain identical to those detailed in Section~\ref{sec:results_main}.

The average performance ratios are summarized in Table~\ref{tab:generalization_graphs}. These results suggest that the learned mapping does not simply overfit to the ER training distribution. Instead, it achieves the strongest performance on Power-Law graphs and remains competitive on Regular graphs, performing very closely to the best heuristic baseline.

\begin{table}[htbp]
  \centering
  \caption{\textbf{Average Performance Ratios on Unseen Graph Topologies.} The model is trained on ER graphs and evaluated on ER, Power-Law, and Regular graphs at $N=100$. The numbers in parentheses indicate the number of test instances for each distribution.}
  \label{tab:generalization_graphs}
  \resizebox{0.68\textwidth}{!}{
  \begin{tabular}{lccc}
    \toprule
    Method & ER (50, In-dist) & Power-Law (25) & Regular (25) \\
    \midrule
    QAOA\textsuperscript{2}~(Random)     & 0.8262 & 0.7715 & 0.7492 \\
    QAOA\textsuperscript{2}~(Modularity) & 0.8645 & 0.8548 & \textbf{0.8643} \\
    QAOA\textsuperscript{2}~(Boundary)   & 0.8580 & 0.8385 & 0.8514 \\
    QAOA\textsuperscript{2}~(KL)         & 0.8211 & 0.7687 & 0.7485 \\
    \textbf{Neural~QAOA\textsuperscript{2}~(Ours)} & \textbf{0.8717} & \textbf{0.8566} & 0.8628 \\
    \bottomrule
  \end{tabular}
  }
\end{table}

\subsection{Evaluator Calibration and Reliability Analysis}
\label{app:calibration}

To address concerns regarding the potential ``over-optimization'' of the learned quantum evaluator $f_\phi$ during test-time adaptation (TTA), we conduct a calibration study on configurations generated by the joint generator $g_\theta$. Figure~\ref{fig:calibration_dual} illustrates the correlation between the predicted performance ratio ($\hat{\rho}$) and the ground-truth performance ratio ($\rho$) obtained via quantum simulation.

\begin{figure}[h]
    \centering
    \includegraphics[width=0.6\columnwidth]{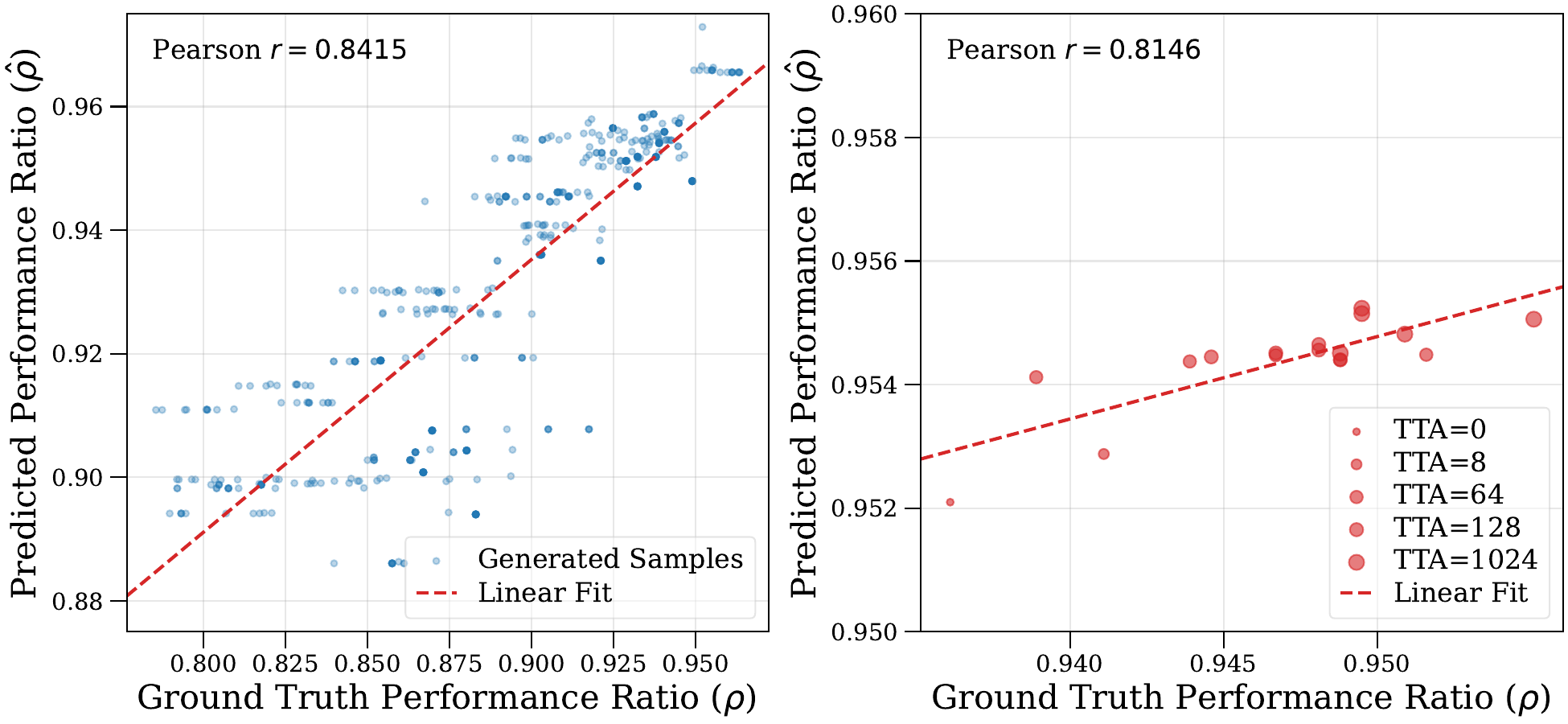}
    \caption{\textbf{Evaluator Calibration on Generator-Produced Samples.}
    Left: global correlation between predicted ($\hat{\rho}$) and ground-truth ($\rho$) performance ratios.
    The scatter plot comprises 500 configurations $(\mathbf{S}, \mathbf{P})$ collected from 10 independent runs of Neural~QAOA\textsuperscript{2} (with 64 fine-tuning steps) across the 50 test instances used in Section~\ref{sec:results_main}.
    Right: analysis on the representative instance \textit{g05\_100.1} (analyzed in Section~\ref{sec:why_works}).
    Data points represent configurations generated from 5 independent runs at various stages of test-time adaptation (TTA).
    The consistently high Pearson correlation ($r > 0.81$) demonstrates that the evaluator remains well-calibrated both globally and locally during the optimization trajectory, effectively mitigating the risk of distribution shift in high-performance regions.
    }
    \label{fig:calibration_dual}
\end{figure}

\textbf{Robustness to Distribution Shift.}
As illustrated in Figure~\ref{fig:calibration_dual} (Left), to rigorously assess the global reliability of our surrogate, we aggregated 500 generated configurations $(\mathbf{S}, \mathbf{P})$ from 10 independent inference runs across the 50 test instances used in Section~\ref{sec:results_main}.
Specifically, these samples were collected after 64 steps of fine-tuning, a stage where the generator is expected to produce high-quality joint partition and parameter configurations that deviate from the initial heuristic-based training distribution.
Despite this potential distribution shift, the evaluator maintains a high Pearson correlation of $r = 0.8415$.
This strong linear alignment confirms that the evaluator's predictions remain tightly coupled with the ground truth even in the high-performance regions favored by the learned generator, strongly mitigating concerns about ``gaming'' a fragile surrogate.

\textbf{Accuracy during Test-Time Adaptation.}
To verify local fidelity, Figure~\ref{fig:calibration_dual} (Right) visualizes the configurations sampled from 5 independent optimization runs on the representative instance \textit{g05\_100.1}.
We explicitly track the correlation at various stages of TTA.
Even at 1024 steps of fine-tuning, where the generator explores the unseen high-performance region beyond the initial heuristic distribution, the evaluator maintains a strong correlation of $r = 0.8146$.
This alignment alleviates concerns that the generator might game the surrogate objective,
and indicates that the evaluator provides a reliable gradient signal that effectively guides
the search for improved quantum partitions and parameters.

\subsection{Robustness to Evaluator Prediction Errors}
\label{app:prediction_error}
To investigate the robustness of our generator to prediction errors in the evaluator, we assess the average performance ratio ($\bar{\rho}$) using evaluator checkpoints from earlier training epochs.
These earlier checkpoints exhibit higher prediction errors, measured by the validation mean squared error (MSE).

The results are summarized in Table~\ref{tab:evaluator_robustness}.
Differences in performance compared to the fully converged model (Epoch 100) are evaluated for statistical significance using a one-sided paired t-test ($p < 0.01$).
As shown, a statistically significant degradation in $\bar{\rho}$ is observed only when the prediction error increases substantially (e.g., by approximately $1.67\times$ at Epoch 1).
These results suggest that the proposed method remains reasonably tolerant to moderate inaccuracies in the surrogate evaluator.

\begin{table}[htbp]
  \centering
  \caption{\textbf{Robustness Analysis of the Generator Against Evaluator Prediction Errors.} Evaluator checkpoints from earlier training epochs (higher validation MSE) are used to guide the generator. An asterisk (*) denotes a statistically significant performance drop compared to the fully converged model (Epoch 100), determined via a one-sided paired t-test ($p < 0.01$).}
  \label{tab:evaluator_robustness}
  \begin{tabular}{ccc}
    \toprule
    Epoch & Prediction Error ($\times 10^{-4}$) & Mean $\bar{\rho}$ \\
    \midrule
    100 & 1.92 & 0.8930 \\
    20  & 1.99 & 0.8923 \\
    2   & 2.82 & 0.8903 \\
    1   & 3.21 & 0.8879\rlap{*} \\
    \bottomrule
  \end{tabular}
\end{table}

\subsection{Computational Cost versus Problem Size}
\label{app:computational_cost}
The overall computational cost comprises two components: the offline training cost and the online inference cost.

\textbf{Offline Training Cost.}
We decompose the offline training cost into two distinct phases: offline dataset construction and model training. Building the offline dataset takes approximately two days of wall-clock time using four NVIDIA A30 GPUs. As the problem size $N$ increases, this construction overhead grows roughly linearly within the tested regime (see Appendix~\ref{app:datasets} and Appendix~\ref{app:qaoa2_recursion}). For the model training, the evaluator and generator require approximately 2.65 hours (for 100 epochs) and 2.12 hours (for 1,500 epochs), respectively.
To analyze how this training cost scales with $N$, we record the wall-clock time per training batch, as detailed in Table~\ref{tab:training_cost_scaling}.
Notably, when $N$ increases by a factor of 10 (from 51 to 501), the batch training time increases by only $2\times$ for the evaluator and $3.3\times$ for the generator, which is sub-linear scaling.

\begin{table}[htbp]
  \centering
  \caption{Average wall-clock time per training batch for the evaluator and generator across varying problem sizes ($N$). The batch sizes are fixed at 32 for the evaluator and 16 for the generator. All measurements were conducted on NVIDIA A30 GPUs.}
  \label{tab:training_cost_scaling}
  \resizebox{0.48\textwidth}{!}{
  \begin{tabular}{ccc}
    \toprule
    Problem Size ($N$) & Evaluator Time (ms) & Generator Time (ms) \\
    \midrule
    51  & 67  & 123 \\
    101 & 71  & 144 \\
    251 & 72  & 334 \\
    501 & 134 & 406 \\
    \bottomrule
  \end{tabular}
  }
\end{table}

Overall, as $N$ increases, the dataset construction cost grows roughly linearly, while the model training time scales sub-linearly.
Therefore, the total offline training cost remains scalable with respect to the problem size $N$.

\textbf{Online Inference Cost.}
We evaluate the end-to-end inference time under a fixed budget of 64 test-time adaptation steps.
As detailed in Table~\ref{tab:scaling_inference}, the runtime increases substantially with the problem size, exhibiting a near-linear growth trend within the tested regime.
The dominant computational overhead stems from the repeated forward and backward adaptation passes, combined with the quantum circuit simulation calls inherent to the QAOA\textsuperscript{2} pipeline.

\begin{table}[htbp]
  \centering
  \caption{Online inference time across varying problem sizes ($N$) under a fixed 64-step adaptation budget.}
  \label{tab:scaling_inference}
  \begin{tabular}{cc}
    \toprule
    Problem Size ($N$) & Total Time (s) \\
    \midrule
    51  & 50.3 \\
    101 & 87.0 \\
    251 & 174.3 \\
    501 & 435.6 \\
    \bottomrule
  \end{tabular}
\end{table}

\subsection{Extended Comparisons with Classical and Hybrid Solvers}
\label{app:extended_comparisons}

To provide a more comprehensive evaluation, we extend our comparisons to include a fine-tuned version of the hybrid PI-GNN solver~\cite{schuetz2022combinatorial} and the classical Goemans-Williamson (GW) algorithm~\cite{goemans1995improved}.
For PI-GNN, we conduct a grid search over two key hyperparameters (learning rates and dropout probabilities) based on the official repository\footnote{https://github.com/amazon-science/co-with-gnns-example}, retaining default values for the remainder.
The detailed configurations for PI-GNN and GW are provided in Table~\ref{tab:pignn_hyper} and Table~\ref{tab:gw_hyper}, respectively.

The performance comparison is summarized in Table~\ref{tab:extended_comparison}, with results reported as the mean performance ratio alongside the standard deviation to illustrate the variance across instances. Additionally, Figure~\ref{fig:boxplots} provides boxplots to better visualize the performance distribution. As observed, the results are mixed and comparable; no single method dominates others across all datasets.

We emphasize that the contribution of this work is not to establish empirical supremacy over classical heuristics, which remains a long-term challenge for quantum combinatorial optimization~\cite{gemeinhardt2023quantum}.
Rather, it is introducing a differentiable, end-to-end learning paradigm to overcome the partitioning and initialization bottlenecks within the QAOA\textsuperscript{2} framework.
Advancing the scalability of QAOA under realistic constraints is valuable in this area.

\begin{table}[htbp]
  \centering
  \caption{Performance comparison of our method against the fine-tuned PI-GNN and the classical Goemans-Williamson (GW) heuristic. Results are reported as the mean performance ratio $\pm$ standard deviation.}
  \label{tab:extended_comparison}
  \resizebox{0.68\textwidth}{!}{
  \begin{tabular}{lccc}
    \toprule
    Dataset (\# Instances) & PI-GNN & GW & \textbf{Ours} \\
    \midrule
    \textit{B} (8)   & $0.9416 \pm 0.0157$ & $0.8141 \pm 0.0240$ & $0.8417 \pm 0.0256$ \\
    \textit{BE} (16) & $0.9311 \pm 0.0098$ & $0.8664 \pm 0.0209$ & $0.8824 \pm 0.0304$ \\
    \textit{W} (26)  & $0.9106 \pm 0.0426$ & $0.9473 \pm 0.0117$ & $0.9153 \pm 0.0280$ \\
    \bottomrule
  \end{tabular}
  }
\end{table}

\begin{figure}[htbp]
  \centering
  \includegraphics[width=0.6\columnwidth]{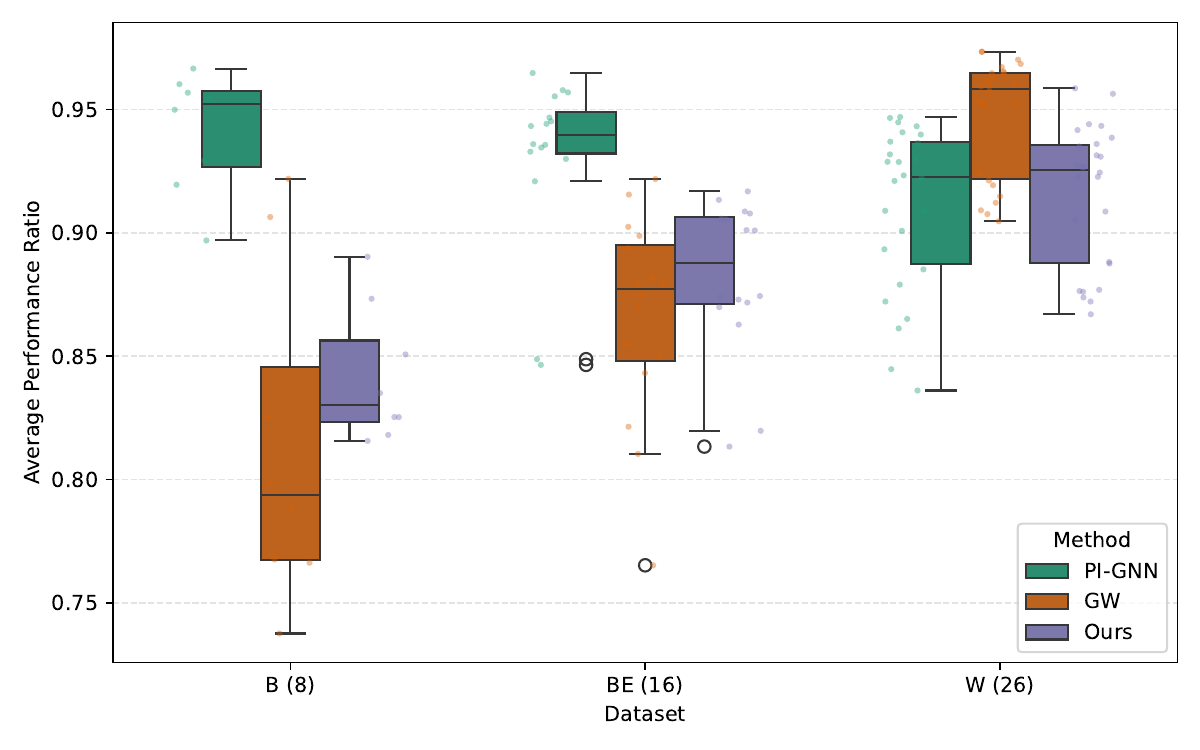}
  \caption{Boxplots of the average performance ratio across three datasets: \textit{B} (8), \textit{BE} (16), and \textit{W} (26), comparing PI-GNN, GW, and our method. For each dataset, three side-by-side boxplots are presented, with colors indicating the respective methods and the y-axis reporting the average performance ratio (higher is better). Each boxplot is constructed from all instance-level results of a given dataset-method pair: the center line marks the median, the box bounds are the first and third quartiles (Q1 and Q3), and the box height indicates the interquartile range (IQR). The whiskers extend to the most extreme observations within [Q1 - 1.5IQR, Q3 + 1.5IQR], while points outside this range are shown as outliers.}
  \label{fig:boxplots}
\end{figure}

\begin{table}[htbp]
  \centering
  \caption{Detailed hyperparameter settings for the fine-tuned PI-GNN baseline. These configurations correspond to the optimal setup identified in our grid search, adhering to the recommendations from the official repository.}
  \label{tab:pignn_hyper}
  \resizebox{0.68\textwidth}{!}{
  \begin{tabular}{llp{8.5cm}}
    \toprule
    Parameter & Value & Description \\
    \midrule
    \texttt{learning\_rate} & $10^{-3}$ & Adam learning rate; controls the optimization step size. \\
    \texttt{number\_epochs} & 100,000 & Maximum training epochs per run. \\
    \texttt{prob\_threshold}& 0.5 & Threshold for binarizing node probabilities into partition bits. \\
    \texttt{tolerance}      & $10^{-4}$ & Early-stopping loss-delta tolerance. \\
    \texttt{patience}       & 100 & Early-stopping patience (consecutive epochs). \\
    \texttt{dropout}        & 0.5 & Dropout probability for hidden layers (regularization). \\
    \texttt{n\_layers}      & 2 & Number of GNN hidden propagation layers. \\
    \bottomrule
  \end{tabular}
  }
\end{table}

\begin{table}[htbp]
  \centering
  \caption{SDP solver configurations and randomized rounding parameters for the Goemans-Williamson (GW) baseline.}
  \label{tab:gw_hyper}
  \resizebox{0.68\textwidth}{!}{
  \begin{tabular}{llp{8.5cm}}
    \toprule
    Parameter & Value & Description \\
    \midrule
    \texttt{solver}             & SCS & SDP solver used for the GW relaxation. \\
    \texttt{max\_iters}         & 25 & Maximum SCS iterations. \\
    \texttt{eps}                & $10^{-4}$ & SCS convergence tolerance. \\
    \texttt{alpha}              & 1.8 & SCS relaxation parameter. \\
    \texttt{scale}              & 5.0 & SCS data scaling parameter. \\
    \texttt{random\_hyperplanes}& 5 & Number of randomized hyperplane rounding trials. \\
    \bottomrule
  \end{tabular}
  }
\end{table}

\subsection{Sample-efficiency Ablation Study}
\label{app:sample-efficiency}
To further evaluate the sample-efficiency of the proposed method under reduced offline supervision, we additionally conduct an ablation study by training the evaluator on nested subsets containing 20\%, 40\%, 60\%, 80\%, and 100\% of the original offline dataset.
Crucially, the validation and test sets remain strictly identical across all settings to ensure fair comparison.
The performance across different problem sizes $N \in \{51, 100, 501, 800, 1000\}$ is reported in Table~\ref{tab:sample_efficiency}.
Statistical significance against the 100\% training setting is evaluated using a one-sided paired t-test, where a significant performance degradation ($p < 0.05$) is marked with ``*''.
We define ``sufficient generalization'' as achieving no statistically significant degradation compared to the full-dataset training setting.

Overall, the data requirement exhibits a roughly monotonic scaling trend: while only 40\% of the data is sufficient for generalizing to $N=51$, the requirement increases to 60\% for $N=100$.
For large-scale generalization ($N = 501, 800, 1000$), the model demands denser coverage of the training space, requiring at least 80\% (approximately 33k samples) to avoid statistically significant degradation.
This suggests that generalizing to larger problem sizes necessitates a larger offline dataset, though the requirement empirically appears to saturate around the 80\% threshold for the tested scales.

\begin{table*}[t]
\centering
\caption{Generalization performance across varying problem sizes ($N$) when training on different fractions of the offline dataset. A statistically significant performance drop compared to the 100\% model ($p < 0.05$, one-sided paired t-test) is denoted by ``*''.}
\label{tab:sample_efficiency}
\resizebox{\textwidth}{!}{
\begin{tabular}{lccccc}
\toprule
Train Fraction (Size) & $N=51$ (p-value) & $N=100$ (p-value) & $N=501$ (p-value) & $N=800$ (p-value) & $N=1000$ (p-value) \\
\midrule
100\% (42,069) & 0.9125 (N/A) & 0.9059 (N/A) & 0.8935 (N/A) & 0.8670 (N/A) & 0.7244 (N/A) \\
80\% (33,655) & - & - & 0.8925 (7.26$\times10^{-2}$) & 0.8663 (1.10$\times10^{-1}$) & 0.7229 (9.60$\times10^{-2}$) \\
60\% (25,241) & 0.9110 (2.35$\times10^{-1}$) & 0.9042 (1.66$\times10^{-1}$) & 0.8923* (4.41$\times10^{-2}$) & 0.8655* (2.07$\times10^{-2}$) & 0.7221* (2.68$\times10^{-2}$) \\
40\% (16,827) & 0.9087 (7.12$\times10^{-2}$) & 0.8998* (4.41$\times10^{-2}$) & 0.8919* (2.16$\times10^{-2}$) & 0.8652* (9.07$\times10^{-4}$) & 0.7212* (3.10$\times10^{-3}$) \\
20\% (8,314) & 0.9030* (4.35$\times10^{-2}$) & 0.9007* (3.71$\times10^{-2}$) & - & - & - \\
\bottomrule
\end{tabular}
}
\end{table*}

\section{Theoretical Analysis}
\label{app:proofs}

In this section, we provide the detailed proof for the boundedness of our proposed performance ratio metric $\rho$, as stated in Proposition~\ref{prop:boundedness}.

\subsection{Proof of Proposition~\ref{prop:boundedness}}

\begin{proof}
\textbf{Restatement of Proposition:} \textit{For any graph $G$, the metric $\rho$, evaluated on the output of the QAOA\textsuperscript{2}, is theoretically bounded in $[0.5, 1.0]$.}

\textbf{Derivation:}
Recall the definition of the metric:
\begin{equation}
    \rho = \frac{\text{Cut}(G) - \text{Neg}(G)}{\text{OPT}(G) - \text{Neg}(G)}.
    \label{eq:rho_def}
\end{equation}
Let $W_{\text{total}} = \sum_{(u,v) \in E} |w_{uv}|$ be the sum of absolute weights of all edges, where we assume $W_{\text{total}} > 0$.
We can decompose the total weight into positive and negative components: $W_{\text{total}} = \text{Pos}(G) + |\text{Neg}(G)|$. Note that $\text{Neg}(G)$ represents the sum of negative edge weights (a negative number), implying $-\text{Neg}(G) = |\text{Neg}(G)|$.

The numerator can be interpreted as the cut weight on a shifted spectrum where all edge weights are non-negative:
\begin{equation}
    \text{Numerator} = \text{Cut}(G) + |\text{Neg}(G)|.
\end{equation}

\textbf{Upper Bound ($\le 1.0$):}
Since $\text{Cut}(G) \le \text{OPT}(G)$ holds by definition, and the denominator is positive, it trivially follows that:
\begin{equation}
    \rho = \frac{\text{Cut}(G) - \text{Neg}(G)}{\text{OPT}(G) - \text{Neg}(G)} \le 1.0.
\end{equation}

\textbf{Lower Bound ($\ge 0.5$):}
This lower bound is guaranteed by the theoretical properties of the QAOA\textsuperscript{2} framework.
To establish the lower bound, we invoke Theorem 2 from \citet{zhou2023qaoa}, which proves that the QAOA\textsuperscript{2} framework theoretically guarantees that the output solution quality lower-bounded is by the expectation of a random cut. Specifically, the theorem states:
\begin{equation}
    \text{Cut}(G) \ge \frac{1}{2} \sum_{(u,v) \in E} w_{uv} = \frac{1}{2} \left( \text{Pos}(G) + \text{Neg}(G) \right).
\end{equation}
The validity of Proposition~\ref{prop:boundedness} relies on this standard algorithmic property (i.e., the optimized circuit performs no worse than random assignment).
Under this premise, we substitute the lower bound into the numerator of Eq.~\eqref{eq:rho_def}:
\begin{equation}
\begin{aligned}
    \text{Numerator} &= \text{Cut}(G) - \text{Neg}(G) \\
    &\ge \frac{1}{2} \left( \text{Pos}(G) + \text{Neg}(G) \right) - \text{Neg}(G) \\
    &= \frac{1}{2} \text{Pos}(G) - \frac{1}{2} \text{Neg}(G).
\end{aligned}
\end{equation}
Since $-\text{Neg}(G) = |\text{Neg}(G)|$, the numerator satisfies:
\begin{equation}
    \text{Numerator} \ge \frac{1}{2} \left( \text{Pos}(G) + |\text{Neg}(G)| \right) = \frac{1}{2} W_{\text{total}}.
\end{equation}
For the denominator, we observe that the optimal cut value $\text{OPT}(G)$ is naturally upper-bounded by the sum of all positive edge weights (corresponding to the ideal scenario where all positive edges are cut and no negative edges are cut):
\begin{equation}
    \text{OPT}(G) \le \text{Pos}(G).
\end{equation}
Consequently, the denominator is bounded by:
\begin{equation}
\begin{aligned}
    \text{Denominator} &= \text{OPT}(G) - \text{Neg}(G) \\
    &\le \text{Pos}(G) - \text{Neg}(G) \\
    &= \text{Pos}(G) + |\text{Neg}(G)| = W_{\text{total}}.
\end{aligned}
\end{equation}
Combining the bounds for the numerator and the denominator, we derive the lower bound for the performance ratio:
\begin{equation}
    \rho = \frac{\text{Numerator}}{\text{Denominator}} \ge \frac{\frac{1}{2} W_{\text{total}}}{W_{\text{total}}} = 0.5.
\end{equation}
Therefore, we conclude that $\rho \in [0.5, 1.0]$.
\end{proof}

\section{Extended Related Work}
\label{app:extended_related}

\textbf{Scalable QAOA beyond D\&C.}
Beyond divide-and-conquer strategies, scalable QAOA has also been explored through alternative paradigms that reduce effective problem size.
Representative approaches include recursive variable elimination, such as R-QAOA~\cite{bravyi2020obstacles},
and multilevel coarsening methods, such as MLQAOA~\cite{bach2024mlqaoa},
which construct hierarchical problem representations to enable execution on limited quantum resources.

\textbf{Learning-based Combinatorial Optimization.}
Deep learning for combinatorial optimization has evolved from sequential reinforcement learning approaches~\cite{khalil2017learning,liu2023howgood}, parallel unsupervised relaxation methods~\cite{schuetz2022combinatorial}, to large language model-driven enhancements~\cite{liu2024large,zhang2025llm,JiangSQLZZY25}.
Recent works~\cite{ichikawa2024controlling, abate2025maxcutpool, shen2025free, ichikawa2025optimization,lv2025cascaded} have demonstrated strong performance improvements in classical solvers.
In the context of quantum optimization, an emerging line of research explores how deep learning can complement, rather than replace, quantum algorithms.
Notably, prior studies have investigated supervised training of quantum neural networks~\cite{ye2023towards, ye2025designing}
as well as learning-based customization of mixer unitary ansatz structures, such as MG-Net~\cite{qian2024mg}.
Complementary to these efforts, we address limitations in scalable QAOA\textsuperscript{2}, utilizing neural networks to jointly generate partitions and initial circuit parameters.

\section{Additional Discussion}
\label{app:additional_discussion}
\textbf{Barren Plateaus.}
Our method does not claim to solve barren plateaus in general.
Our method is trained only at $p=1$, so we do not optimize a high-depth circuit from scratch, thereby circumventing severe barren-plateau issues.
For $p>1$, we use the standard parameter expansion strategy from prior work~\cite{zhou2020quantum} rather than retraining the model end-to-end in the larger parameter space.
Thus, the current framework largely avoids the most difficult high-depth training regime rather than fundamentally resolving barren plateaus.
Empirically, under this protocol we still observe competitive results at $p=2,3$ in Appendix~\ref{app:advanced-param-init-deeper-circuit}.

\textbf{Two-Stage Training.}
We adopted the decoupled two-stage design deliberately to ensure optimization stability and tractability. While meta-learning approaches (e.g., MAML) could optimize the generator more explicitly for post-adaptation quantum performance rather than against a frozen surrogate alone, they would require differentiating through an inner adaptation loop, leading to substantially higher computational and memory cost, and potentially much noisier optimization in our quantum-in-the-loop setting. Therefore, optimizing against a frozen surrogate serves as a much more practical and stable first step.

\textbf{Surrogate Sensitivity.}
To assess the sensitivity to surrogate inaccuracies, a critical question is whether downstream optimization degrades materially when the evaluator is imperfect. We examine this directly in two ways: First, during long test-time adaptation (TTA) trajectories (see Appendix~\ref{app:calibration}, Figure~\ref{fig:calibration_dual}), the evaluator remains well-aligned with the true QAOA\textsuperscript{2} performance even in newly explored regions.
Specifically, it maintains a correlation of $0.8146$ after $1,024$ TTA steps, suggesting that the surrogate stays informative beyond the densest part of the offline training distribution. Second, we conduct ablation experiments intentionally using less accurate evaluators from earlier training epochs. The average performance ratio $\bar{\rho}$ exhibits only a mild degradation from $0.8930$ (using a fully converged evaluator) to $0.8923$ and $0.8903$ for earlier checkpoints, and drops significantly ($0.8879$) only when using the least accurate checkpoint (see Appendix~\ref{app:prediction_error}). Thus, while the objective is indeed indirect, the generator appears reasonably robust to moderate surrogate inaccuracies rather than highly fragile to them.

\textbf{Latency-Quality Trade-off.}
We additionally analyze the tradeoff between fine-tuning cost and achievable solution quality across different problem sizes.
We suggest evaluating this tradeoff through the lens of Pareto optimality.
Specifically, the proposed method achieves Pareto dominance over existing QAOA\textsuperscript{2} baselines in terms of solution quality versus computational cost. Under a fixed fine-tuning budget of 64 steps, our method already outperforms existing QAOA\textsuperscript{2} baselines on average while maintaining comparable computational cost (see Section~\ref{sec:ablation}, Figure~\ref{fig:tta_ablation}), indicating that the proposed approach is a Pareto optimal method. For large-scale problems, existing QAOA\textsuperscript{2} methods often fail to obtain higher-quality solutions, while our method can reach these solutions if a larger budget is used. By doing so, we push the current Pareto frontier forward.

To further characterize the scaling behavior of test-time adaptation, we define the effective fine-tuning steps ($T_{\mathrm{eff}}$) as the minimum number of fine-tuning steps required to recover 90\% of the total improvement between the performance without fine-tuning steps and the empirical ceiling performance (measured at 1000 fine-tuning steps).

\begin{table}[h]
\centering
\caption{Scaling behavior of test-time fine-tuning across different problem sizes. $T_{\mathrm{eff}}$ denotes the minimum number of fine-tuning steps required to recover 90\% of the achievable improvement, while $\tau$ denotes the average wall-clock time per fine-tuning step.}
\label{tab:ft_scaling}
\begin{tabular}{ccc}
\toprule
Problem Size ($N$) & $T_{\mathrm{eff}}$ & $\tau$ (ms) \\
\midrule
51  & 16  & 21  \\
101 & 64  & 28  \\
251 & 256 & 59  \\
501 & 512 & 151 \\
\bottomrule
\end{tabular}
\end{table}

The results suggest that the adaptation budget required to capture most of the achievable gain increases with problem size in the tested regime. However, the scaling trend remains almost linearly in the large-scale regime (e.g., $T_{\mathrm{eff}}$ doubles from 256 to 512 as $N$ increases from 251 to 501). Furthermore, the wall-clock overhead of each fine-tuning step is on the millisecond scale due to GPU acceleration. Table~\ref{tab:ft_scaling} reports the average wall-clock time per fine-tuning step across different problem sizes. These findings suggest that allocating larger fine-tuning budgets for large-scale instances is computationally practical.

\end{document}